\documentclass{gGAF2e}
\usepackage{natbib}
\usepackage{color}
\usepackage{makecell}
\pdfoutput=1
\newcommand{\rev}[1]{{\color{blue}#1}} 
\usepackage[normalem]{ulem}
\usepackage[center]{caption}

\begin{document}

\jvol{00} \jnum{00} \jyear{2023} 

\markboth{\rm C.~LIU and  A.D.~CLARK}{\rm GEOPHYSICAL \&  ASTROPHYSICAL FLUID DYNAMICS}

\title{Semi-analytical solutions of shallow water waves\\  with idealised bottom topographies}

\author{Chang Liu ${\dag}$ ${\ddag}$ $^\ast$\thanks{$^\ast$Corresponding author. Email: cliu124@alumni.jh.edu
\vspace{6pt}} and Antwan~D.~Clark ${\dag}$ \\\vspace{6pt} ${\dag}$ Department of Applied Mathematics and Statistics, Johns Hopkins University \\ Baltimore,
MD 21218 USA\\ ${\ddag}$ Department of Physics, University of California, Berkeley \\
Berkeley, CA 94720 USA  \\\vspace{6pt}\received{v4.4 released July 2022} }

\maketitle

\begin{abstract}
	
Analysing two-dimensional shallow water equations with idealised bottom topographies have many applications in the atmospheric and oceanic sciences; however, restrictive flow pattern assumptions have been made to achieve explicit solutions. This work employs the Adomian decomposition method (ADM) to develop semi-analytical formulations of these problems that preserve the direct correlation of the physical parameters while capturing the nonlinear phenomenon. Furthermore, we exploit these techniques as reverse engineering mechanisms to develop key connections between some prevalent ansatz formulations in the open literature as well as derive new families of exact solutions describing geostrophic inertial oscillations and anticyclonic vortices with finite escape times. Our semi-analytical evaluations show the promise of this approach in terms of  providing robust approximations against several oceanic variations and bottom topographies while also preserving the direct correlation between the physical parameters such as the Froude number, the bottom topography, the Coriolis parameter, as well as the flow and free surface behaviours. Our numerical validations provide additional confirmations of this approach while also illustrating that ADM can also be used to provide insight and deduce novel solutions that have not been explored, which can be used to characterize various types of  geophysical flows. 
\begin{keywords}
	Adomian decomposition method; shallow water equations; bottom topographies
\end{keywords}

\end{abstract}

\section{Introduction}
\par Analysing two-dimensional shallow water equations has been extensively studied in geophysical fluid dynamics to understand a myriad of atmospheric and oceanic phenomena. Some examples include understanding the effects of long-term oceanic waves \citep{pedlosky2013geophysical, vallis2017atmospheric}, analyzing the behaviour of oceanic warm-core rings \citep{cushman1987exact}, investigating flows in channels and shorelines \citep{shapiro1996nonlinear, sampson2005moving}, studying steady-state flows \citep{iacono2005analytic, sun2016high}, and grasping the temporal instability of barotropic zonal flows \citep{clark2013improved}. These theoretical analyses also serve as a good basis for numerical simulations and validations. For example, the creators of the Shallow Water Analytic Solutions for Hydraulic and Environmental Studies (SWASHES) software library \citep{delestre2013swashes} incorporated a significant number of theoretical solutions of the shallow water equations in the open literature, which has been cited by over $200$ research papers currently. Furthermore, several of the solutions in this library are obtained from \citet{thacker1981some} in which have been widely used to demonstrate the validity and accuracy of several numerical schemes including finite volume schemes \citep{gallardo2007well,bollermann2011finite,nikolos2009unstructured} and discontinuous Galerkin methods \citep{ern2008well,kesserwani2012locally,li2017positivity,wintermeyer2018entropy}. Some significant advancements include the original works of Ball and Thacker who demonstrated that nonlinear oscillations can be modelled as either low-order polynomials or normal modes \citep{ball1963some, ball1964exact, ball1965effect, thacker1977irregular, thacker1981some}. Researchers also developed elliptical vortex solutions to understand the temporal effects of oceanic warm-core rings including stationary clockwise rotations (rodons), pulsating circular eddies (pulsons), and a subclass of these phenomena called pulsrodons \citep{cushman1987exact, cushman1985oscillations, rogers1989elliptic}. Extensions to these approaches have been made, where some examples include the work of
\cite{sachdev1996regular} who extended the approach of  \cite{clarkson1989new} and derived new families of solutions in paraboloidal basins that provided additional insights in terms of describing flow behaviour due to deformation modes. Additionally, \citet{matskevich2019exact} extended the results of Ball and Thacker to include the effects of Coriolis forces and bottom friction.
\cite{bristeau2021analytical} also extended the results of Thacker and introduced two respective solutions describing velocity distributed along the vertical axis and velocity accounting for variable density. 

\par Group analysis was also explored. Some pioneering works in this area include that of
\cite{curro1989some} and
\cite{rogers1989generation} who also advanced the works of Thacker and Ball and related several forms of the depth function as well as developed invariance theorems.
\cite{levi1989group} developed symmetry reductions for flows with elliptic and circular bottom topographies.
\cite{mansfieldsymmetry} derived Lie point symmetries and conservation laws. \citet{chesnokov2009symmetries} discovered $9$-dimensional Lie algebra point symmetries and developed transformations between rotating and non-rotating cases, which were later used to describe spatial oscillations in spinning paraboloids \citep{chesnokov2011properties}. Some recent advancements include \citet{meleshko2020complete} and
\cite{bihlo2020lie} who performed group classification and analysis for zero and constant Coriolis parameters. Meanwhile, \citet{meleshko2020group} performed similar analysis and considered the beta-plane approximation of the Coriolis parameter and an irregular bottom topography.

\par However, deriving theoretical solutions to the two-dimensional shallow water equations poses the following main challenges. First, these efforts involve making specific assumptions regarding the flow conditions which only satisfy specific cases. Some solutions also contain combinations of special functions and integral expressions \citep{shapiro1996nonlinear, rogers1989elliptic}, which in turn makes it difficult to determine the correlation between the physical quantities of these models. Finding invariant solutions via group analysis has the additional advantage of deriving conservation laws to these equations. However, this approach depends on the construction of Lie-groups which depend on the problem formulation as well as specific assumptions such as the Coriolis parameter and bottom topography. Therefore, there is a need to find solutions that are not only flexible, in terms of relaxing certain limiting assumptions, but also provide a direct correlation of the physical parameters.

\par This work applies Adomian decomposition method (ADM) \citep{adomian1990review} to the shallow water equations to provide the following main contributions. First, we present the ADM formulation of the rotating shallow water equations where we also present key connections between the ansatz formulations in the work of \cite{thacker1981some,shapiro1996nonlinear,matskevich2019exact}. Next, we derive and present some new families of exact solutions, for flat bottom topographies, that describe inertial oscillations in geostrophic flows and anticyclonic vortices with finite escape times. The rest of this paper is organised in the following manner. Section \ref{sec:SWE} presents the ADM formulation and initial theoretical formulation of the problem, where we present the connection to fundamental assumptions on the formulation of the solutions. Section \ref{sec:new_exact} presents derivations of new families of solutions and their properties. Section \ref{sec:numerical} provides numerical experimentation and results. Section \ref{sec:discussion} provides some concluding remarks, where we also list some future research directions. 

\section{Adomian Decomposition Formulation}

\label{sec:SWE}
\begin{figure}[h!]
    \centering
    \includegraphics[width=6in]{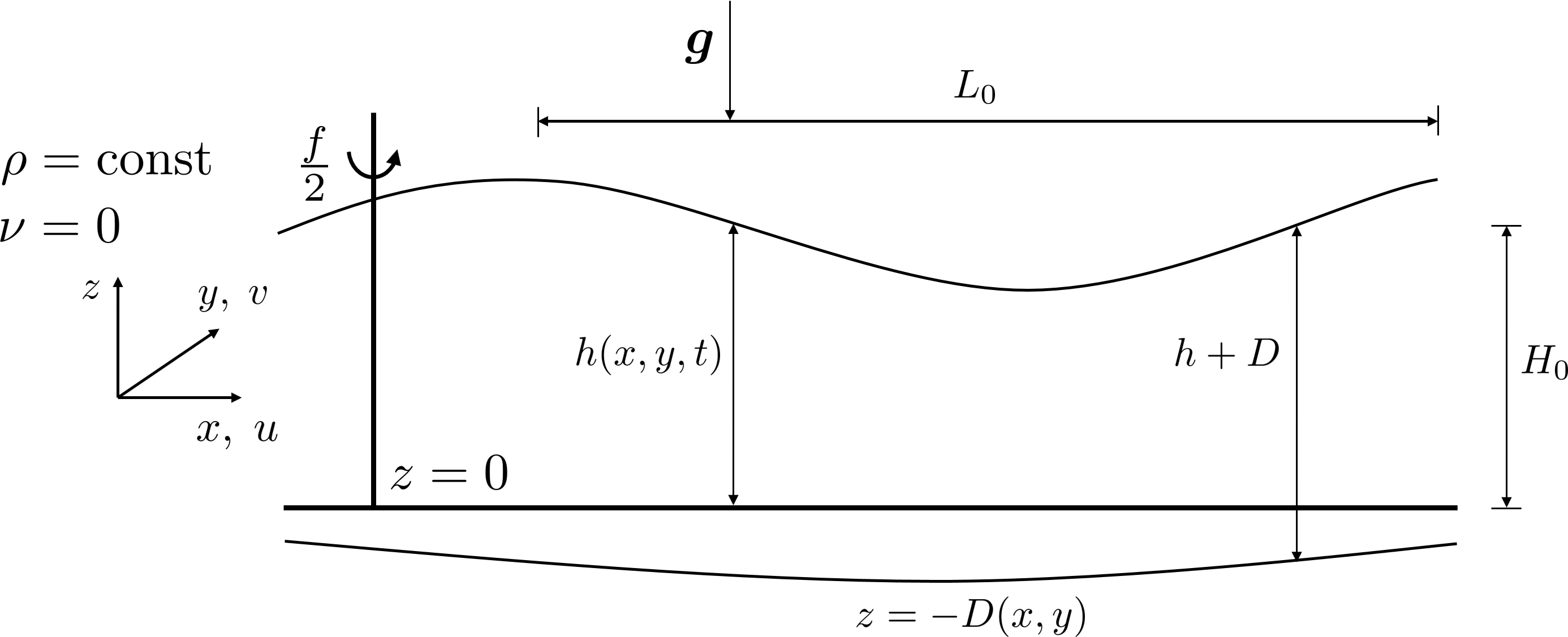}
    \caption{Illustration of a thin layer of incompressible flow under the Earth's rotation described by rotating shallow-water equations with idealised bottom topography.}
\label{fig:SWE}
\end{figure}

\par The non-dimensional form of the governing equations is defined as
\begin{subequations}
\label{eq:SWE_non_dimen}
\begin{align}
    \frac{\upartial u}{\upartial t}=&-u\frac{\upartial u}{\upartial x}-v\frac{\upartial u}{\upartial y}-\frac{1}{F^2}\frac{\upartial h}{\upartial x}+\bar{f}v,  \\
    \frac{\upartial v}{\upartial t}=&-u\frac{\upartial v}{\upartial x}-v\frac{\upartial v}{\upartial y}-\frac{1}{F^2}\frac{\upartial h}{\upartial y}-\bar{f}u, \\
    \frac{\upartial h}{\upartial t}=&-\frac{\upartial }{\upartial x}[u(h+D)]-\frac{\upartial }{\upartial y}[v(h+D)].
\end{align}
\end{subequations}
This is illustrated in figure \ref{fig:SWE}, where $u$ and $v$ are the flow velocity components, $h$ is the free surface height, $\bar{f}=fL_0/U_0$ is the dimensionless Coriolis parameter (associated with the Coriolis force), and $F = U_0/\sqrt{gH_0}$ is the Froude number. Here, the spatial variables $x$, $y$, $l$, and $L$ are normalised by the horizontal length scale $L_0$; $h$ is normalised by a vertical length scale $H_0$; the horizontal velocities, $u$ and $v$, are normalised by the characteristic velocity $U_0$; and time $t$ is normalised by $L_0/U_0$. Hence, the dimensionless form of the idealised bottom topography is defined as
\begin{equation}
    D(x,y) = D_0 \left(1-\frac{x^2}{L^2}-\frac{y^2}{l^2}\right),
\label{eq:idealBottomForm}
\end{equation}
where $D_0$ is also normalised by a vertical length scale $H_0$. It is noteworthy to mention that other bottom topographies can be determined from \eqref{eq:idealBottomForm} such as  flat bottom ($D_0=0$), circular paraboloid ($l=L$), and channel ($l\rightarrow \infty$ or $L\rightarrow \infty$) terrains. Additionally, $D(x,y)$ can also be used to incorporate linear terms in its description via change of variables \citep{shapiro1996nonlinear,thacker1981some}. The total fluid depth $D+h$, shown in figure \ref{fig:SWE}, follows the formulations of \cite{thacker1981some} and \cite{shapiro1996nonlinear} where  $D+h=0$ represents a moving shoreline and $D+h < 0$ represents dry regions. When the moving shoreline is closed, the water mass within the shoreline is conserved  \citep{thacker1981some,shapiro1996nonlinear}. When the moving shoreline is open such as in tsunami modelling, then water within a bounded domain will have mass exchange with an infinite mass reservoir. It is also important to mention that our explorations in this section consider flow velocities that are linearly varying spatially while the free surface height either varies linearly or in a quadratic fashion. The initial conditions are given by
\begin{subequations}
\label{eq:SWE_initial}    
\begin{align}
    u(x,y,0) =& u_0 (x,y),\\
    v(x,y,0) =& v_0(x,y),\\
    h(x,y,0) =& h_0(x,y).
\end{align}
\end{subequations}

Next, $u$, $v$, and $h$ are decomposed as follows
\begin{align}
    \begin{bmatrix}
    u(x,y,t)\\
    v(x,y,t)\\
    h(x,y,t)
    \end{bmatrix}=\sum_{n=0}^\infty\begin{bmatrix}
    u_n(x,y,t)\\
    v_n(x,y,t)\\
    h_n(x,y,t)
    \end{bmatrix},
\label{eq:ADM_series}    
\end{align}
where the initial components are defined by equation \eqref{eq:SWE_initial}. 
Thus, the recurrence relationships to equation \eqref{eq:SWE_non_dimen} (for $n \geq 0$) are given by 
\begin{subequations}
    \label{eq:ADM_iter_Linv}
\begin{align}
    &u_{n+1}(x,y,t)={L}_t^{-1}\left\{-A_n\left(u,\frac{\upartial u}{\upartial x}\right)-A_n\left(v,\frac{\upartial u}{\upartial y}\right)-\frac{1}{F^2}\frac{\upartial h_{n}}{\upartial x}+\bar{f}v_n\right\},  \\
    &v_{n+1}(x,y,t)={L}_t^{-1}\left\{-A_n\left(u,\frac{\upartial v}{\upartial x}\right)-A_n\left(v,\frac{\upartial v}{\upartial y}\right)-\frac{1}{F^2}\frac{\upartial h_{n}}{\upartial y}-\bar{f}u_n\right\}, \\
    &h_{n+1}(x,y,t)={L}_t^{-1}\left\{-\frac{\upartial }{\upartial x}[A_n(u,h)]-\frac{\upartial }{\upartial y}[A_n(v,h)]-\frac{\upartial }{\upartial x}[u_n D]-\frac{\upartial }{\upartial y}[v_n D]\right\},
\end{align}
\end{subequations}
where
\begin{align}
    L_t =\frac{\upartial (\cdot)}{\upartial t},\;\;
    \text{and}\;\;
    L_t^{-1}=\int_{0}^{t}(\cdot) \; {\rm d} \tau, \nonumber
\end{align}
and the Adomian polynomial representing the quadratic nonlinearity is defined as \citep{adomian1990review,adomian2013solving} 
\begin{align}
    A_n(u,h) =\sum_{j=0}^n u_{j}h_{n-j}.
\label{eq:ADM_An}
\end{align}
It is important to note that equation \eqref{eq:ADM_An} can be used to approximate the quadratic nonlinear terms, such as $uh$, as follows
\begin{align}
    uh=\left(\sum_{p}^{\infty}u_p \right)\left(\sum_{q}^{\infty}h_q \right)=\sum_{n}^{\infty}A_n(u,h)\nonumber
\end{align}
and thus the semi-analytical solution to \eqref{eq:SWE_non_dimen} is expressed via the partial sums
\begin{align}
    \begin{bmatrix}
    u(x,y,t)\\
    v(x,y,t)\\
    h(x,y,t)
    \end{bmatrix}=\begin{bmatrix}
        S_N(u)\\
        S_N(v)\\
        S_N(h)
    \end{bmatrix}=\sum_{n=0}^N \begin{bmatrix}
        u_n(x,y,t)\\
        v_n(x,y,t)\\
        h_n(x,y,t)
    \end{bmatrix}.
    \label{eq:partial_sum}
\end{align}
Next, the following results connect the properties of the initial conditions to the behaviours of the true solutions via their partial sums. 

\begin{lemma}
Let $\{u_n(x,y,t)\}$, $\{v_n(x,y,t)\}$, $\{h_n(x,y,t)\}$ be the sequence of decomposed functions of $u$, $v$, and $h$, where their relationship is defined by \eqref{eq:ADM_iter_Linv} (for $n \in \mathbb{N}$) given an ideal parabolic topography \eqref{eq:idealBottomForm}. If the initial conditions $u_0(x,y)$, $v_0(x,y)$, $h_0(x,y)$ are defined such that 
\begin{subequations}
    \begin{align}
     \frac{\upartial^2 u_0(x,y)}{\upartial x^2} =& \frac{\upartial^2 v_0(x,y)}{\upartial x^2} = \frac{\upartial^3 h_0(x,y)}{\upartial x^3} = 0,
\label{PartialDerivIC_x_Cond}\\
 \frac{\upartial^2 u_0(x,y)}{\upartial y^2} = &\frac{\upartial^2 v_0(x,y)}{\upartial y^2} = \frac{\upartial^3 h_0(x,y)}{\upartial y^3} = 0,
\label{PartialDerivIC_y_Cond}\\
  \frac{\upartial^2 u_0(x,y)}{\upartial xy}=&\frac{\upartial^2 v_0(x,y)}{\upartial x \upartial y}= \frac{\upartial^3 h_0(x,y)}{\upartial x^2 \upartial y} = \frac{\upartial^3 h_0(x,y)}{\upartial x\upartial y^2} = 0.
\label{PartialDerivIC_mixedxy_Cond}
    \end{align}
\end{subequations}
Then the higher order components $u_n(x,y,t)$, $v_n(x,y,t)$, $h_n(x,y,t)$ also satisfy the same property, where 
\begin{subequations}
    \begin{align}
        \frac{\upartial^2 u_n(x,y,t)}{\upartial x^2} =& \frac{\upartial^2 v_n(x,y,t)}{\upartial x^2} = \frac{\upartial^3 h_n(x,y,t)}{\upartial x^3} = 0,
\label{PartialDeriv_x_Cond}\\
        \frac{\upartial^2 u_n(x,y,t)}{\upartial y^2} =& \frac{\upartial^2 v_n(x,y,t)}{\upartial y^2} = \frac{\upartial^3 h_n(x,y,t)}{\upartial y^3} = 0,
\label{PartialDeriv_y_Cond}\\
        \frac{\upartial^2 u_n(x,y,t)}{\upartial x\upartial y}=&\frac{\upartial^2 v_n(x,y,t)}{\upartial x\upartial y}= \frac{\upartial^3 h_n(x,y,t)}{\upartial x^2\upartial y} = \frac{\upartial^3 h_n(x,y,t)}{\upartial x\upartial y^2} = 0
\label{PartialDeriv_mixedxy_Cond}
    \end{align}
\end{subequations}
for $n \in \mathbb{N}^{+}$.
\label{ADM:PartialDerivativeCoeffLemma}
\end{lemma}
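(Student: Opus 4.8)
The plan is to prove the statement by induction on $n$, after first reinterpreting both the hypotheses and the conclusion as statements about the spatial polynomial degree of the components. A $C^{2}$ function of $(x,y)$ whose three second-order partials all vanish identically is affine, and a $C^{3}$ function all of whose third-order partials vanish identically is a polynomial of total degree at most $2$. Hence conditions \eqref{PartialDerivIC_x_Cond}--\eqref{PartialDerivIC_mixedxy_Cond} are equivalent to saying that $u_{0},v_{0}$ lie in the space $\mathcal{P}_{1}$ of functions of the form $a+bx+cy$ and that $h_{0}$ lies in the space $\mathcal{P}_{2}$ of polynomials of total degree $\le 2$ in $(x,y)$ (with coefficients allowed to depend on $t$); likewise \eqref{PartialDeriv_x_Cond}--\eqref{PartialDeriv_mixedxy_Cond} say exactly that $u_{n},v_{n}\in\mathcal{P}_{1}$ and $h_{n}\in\mathcal{P}_{2}$. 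One also notes that the ideal bottom topography \eqref{eq:idealBottomForm} satisfies $D\in\mathcal{P}_{2}$.

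Next I would record the elementary closure properties that drive the induction: (i) $\upartial_{x}$ and $\upartial_{y}$ map $\mathcal{P}_{k}$ into $\mathcal{P}_{k-1}$ (differentiation in a spatial variable lowers the total degree), with $\mathcal{P}_{0}$ the functions independent of $x$ and $y$; (ii) products add degrees, $\mathcal{P}_{j}\cdot\mathcal{P}_{k}\subseteq\mathcal{P}_{j+k}$, and each Adomian polynomial $A_{n}(f,g)=\sum_{j=0}^{n}f_{j}g_{n-j}$ is a finite sum of products of the listed components; and (iii) the operator $L_{t}^{-1}=\int_{0}^{t}(\cdot)\,{\rm d}\tau$ integrates only in time and therefore leaves the spatial polynomial degree unchanged.

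Then comes the induction. The base case $n=0$ is the hypothesis. For the inductive step, assume $u_{k},v_{k}\in\mathcal{P}_{1}$ and $h_{k}\in\mathcal{P}_{2}$ for all $0\le k\le n$, and read off the right-hand sides of \eqref{eq:ADM_iter_Linv}. In the $u_{n+1}$ and $v_{n+1}$ equations, each term $A_{n}(u,u_{x})=\sum_{j}u_{j}\,\upartial_{x}u_{n-j}$ lies in $\mathcal{P}_{1}\cdot\mathcal{P}_{0}\subseteq\mathcal{P}_{1}$ by (i)--(ii), the term $\upartial_{x}h_{n}$ lies in $\mathcal{P}_{1}$ by (i), and $\bar f v_{n}\in\mathcal{P}_{1}$; applying $L_{t}^{-1}$ and using (iii) gives $u_{n+1},v_{n+1}\in\mathcal{P}_{1}$. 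In the $h_{n+1}$ equation, $A_{n}(u,h)=\sum_{j}u_{j}h_{n-j}\in\mathcal{P}_{1}\cdot\mathcal{P}_{2}\subseteq\mathcal{P}_{3}$ and $u_{n}D\in\mathcal{P}_{1}\cdot\mathcal{P}_{2}\subseteq\mathcal{P}_{3}$, so after one spatial derivative each of the four terms $\upartial_{x}[A_{n}(u,h)]$, $\upartial_{y}[A_{n}(v,h)]$, $\upartial_{x}[u_{n}D]$, $\upartial_{y}[v_{n}D]$ lies in $\mathcal{P}_{2}$; applying $L_{t}^{-1}$ gives $h_{n+1}\in\mathcal{P}_{2}$. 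This closes the induction, and translating $\mathcal{P}_{1}$- and $\mathcal{P}_{2}$-membership back into vanishing partial derivatives yields \eqref{PartialDeriv_x_Cond}--\eqref{PartialDeriv_mixedxy_Cond} for all $n\in\mathbb{N}^{+}$.

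Since the argument is bookkeeping, I do not expect a genuine obstacle; the one point that needs care is the degree accounting in the height equation, where the bilinear Adomian terms and the quadratic topography $D$ each push the degree up to $3$, and it is precisely the single spatial derivative sitting outside them in \eqref{eq:ADM_iter_Linv} that brings it back down to $2$. This is exactly why the hypothesis fixes a parabolic $D$ of the form \eqref{eq:idealBottomForm}: a cubic or higher topography would break the closure. It is also worth stating at the outset that these are polynomial identities on the (connected) spatial domain, so that ``all third partials vanish'' genuinely forces total degree $\le 2$.
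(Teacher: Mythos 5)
Your proof is correct and follows essentially the same route as the paper: an induction on the ADM recursion \eqref{eq:ADM_iter_Linv}, with the base case supplied by the hypotheses on $u_0$, $v_0$, $h_0$. The only difference is presentational --- the paper differentiates the recursion term by term and verifies that each second (resp.\ third) spatial derivative vanishes under the inductive hypothesis, whereas you package the same Leibniz-rule bookkeeping as closure of the spaces $\mathcal{P}_1$ and $\mathcal{P}_2$ under the operations on the right-hand side, which has the minor advantage of making explicit why the quadratic topography $D$ is exactly what the height equation needs to close.
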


\begin{proof}
This is proven via mathematical induction by examining the recursion relationships for $u$, $v$, and $h$ in equation \eqref{eq:ADM_iter_Linv}. Condition \eqref{PartialDeriv_x_Cond} is demonstrated by examining the following relationships
\begin{subequations}
    \begin{align}
        \frac{\upartial^{2} u_{n+1}}{\upartial x^{2}}=&-{L}_t^{-1}\left\{\frac{\upartial^{2}}{\upartial x^{2}}\left[A_n\left(u,\frac{\upartial u}{\upartial x}\right) + A_n\left(v,\frac{\upartial u}{\upartial y}\right) \right] + \frac{1}{F^2}\frac{\upartial^{3} h_{n}}{\upartial x^{3}} - \bar{f}\, \frac{\upartial^{2}v_n}{\upartial x^{2}}\right\},
\label{un_xx}\\
\frac{\upartial^{2} v_{n+1}}{\upartial x^{2}} =&-{L}_t^{-1}\left\{\frac{\upartial^{2}}{\upartial x^{2}} \left[A_n\left(u,\frac{\upartial v}{\upartial x}\right) + A_n\left(v,\frac{\upartial v}{\upartial y}\right) \right] + \frac{1}{F^2}\frac{\upartial^{3} h_{n}}{\upartial x^{2}\upartial y} + \bar{f} \, \frac{\upartial^{2} u_n}{\upartial x^{2}} \right\},
\label{vn_xx}\\
\frac{\upartial^{3} h_{n+1}}{\upartial x^{3}} =& - {L}_t^{-1}\left\{\frac{\upartial^{4} }{\upartial x^{4}}[A_n(u,h)] + \frac{\upartial^{4} }{\upartial x^{3}\upartial y}[A_n(v,h)]+\frac{\upartial^{4} }{\upartial x^{4}}[u_n D] + \frac{\upartial^{4} }{\upartial x^{3}\upartial y}[v_n D]\right\}.\hskip 6mm
\label{hn_xxx}
    \end{align}
\end{subequations}
Therefore, when $n = 0$ equations  (\ref{un_xx}-c)
representing the relationship between the initial and first components for $u$, $v$, and $h$ become
\begin{subequations}
    \begin{align}
\frac{\upartial^{2} u_{1}}{\upartial x^{2}}=& -{L}_t^{-1}\left\{\frac{\upartial^{2}}{\upartial x^{2}}\left[u_{0}\frac{\upartial u_{0}}{\upartial x} \,  + v_{0}\frac{\upartial u_{0}}{\upartial y} \right] + \frac{1}{F^2}\frac{\upartial^{3} h_{0}}{\upartial x^{3}} - \bar{f}\, \frac{\upartial^{2}v_{0}}{\upartial x^{2}}\right\},
\label{u1_xx}\\
\frac{\upartial^{2} v_{1}}{\upartial x^{2}} =& -{L}_t^{-1}\left\{\frac{\upartial^{2}}{\upartial x^{2}} \left[u_{0} \frac{\upartial v_{0}}{\upartial x}+ v_{0} \frac{\upartial v_{0}}{\upartial y} \right] + \frac{1}{F^2}\frac{\upartial^{3} h_{0}}{\upartial x^{2} \upartial y} + \bar{f} \, \frac{\upartial^{2} u_{0}}{\upartial x^{2}} \right\},
\label{v1_xx}\\
\frac{\upartial^{3} h_{1}}{\upartial x^{3}} =& - {L}_t^{-1}\left\{\frac{\upartial^{4} }{\upartial x^{4}}\left[u_{0} h_{0} \right] + \frac{\upartial^{4} }{\upartial x^{3}y} \left[v_{0} h_{0} \right]+\frac{\upartial^{4} }{\upartial x^{4}} \left[u_{0} D \right] + \frac{\upartial^{4} }{\upartial x^{3}\upartial y} \left[v_{0} D \right]\right\}. \hskip 8mm
\label{h1_xxx}
    \end{align}
\end{subequations}
Employing (\ref{PartialDerivIC_x_Cond}-c)
it can be shown that equations (\ref{u1_xx}-c)
reduce to the following relationship
\begin{equation}
    \frac{\upartial^2 u_1(x,y,t)}{\upartial x^2} = \frac{\upartial^2 v_1(x,y,t)}{\upartial x^2} = \frac{\upartial^3 h_1(x,y,t)}{\upartial x^3} = 0.\nonumber
\label{PartialDerivComp1_x_Cond}
\end{equation}
\noindent Continuing this argument for $n = \{1, 2, \ldots, n - 1\}$ yields equation  \eqref{PartialDeriv_x_Cond}. Similar arguments can be made to produce (\ref{PartialDeriv_y_Cond},c), respectively.
\end{proof}

\begin{theorem}
\label{thm:ADM_Ansatz_Connection}
Let $\{u_n(x,y,t)\}$, $\{v_n(x,y,t)\}$, $\{h_n(x,y,t)\}$ be the sequence of decomposed functions of $u$, $v$, and $h$, where their relationship is defined by \eqref{eq:ADM_iter_Linv} (for $n \in \mathbb{N}$) given an ideal parabolic topography \eqref{eq:idealBottomForm}. If the initial conditions $u_0(x,y)$, $v_0(x,y)$, $h_0(x,y)$ are defined as
{\rm (\ref{PartialDerivIC_x_Cond}-c)},
then the solutions of $u$, $v$, and $h$ have the same property where 
\begin{subequations}
    \begin{align}
    \frac{\upartial^2 u(x,y,t)}{\upartial x^2} = &\frac{\upartial^2 u(x,y,t)}{\upartial y^2} 
    = \frac{\upartial^2 u(x,y,t)}{\upartial x \upartial y} = 0,
\label{PartialCondition_u}   \\
\frac{\upartial^2 v(x,y,t)}{\upartial x^2} =& \frac{\upartial^2 v(x,y,t)}{\upartial y^2} 
    = \frac{\upartial^2 v(x,y,t)}{\upartial x \upartial y} = 0,
\label{PartialCondition_v}\\
 \frac{\upartial^3 h(x,y,t)}{\upartial x^3}
    =& \frac{\upartial^3 h(x,y,t)}{\upartial x^2 \upartial y} 
    = \frac{\upartial^3 h(x,y,t)}{\upartial x \upartial y^2} 
    = \frac{\upartial^3 h(x,y,t)}{\upartial y^3} = 0.
\label{PartialCondition_h}
    \end{align}
\end{subequations}
Consequently, these solutions can be expressed as 
\begin{subequations}
    \begin{align}
         {u}(x,y,t)=&\tilde{u}_0(t)+\tilde{u}_x(t)x+\tilde{u}_y(t)y,
\label{uLinearCondition}\\
{v}(x,y,t)=&\tilde{v}_0(t)+\tilde{v}_x(t)x+\tilde{v}_y(t)y,
\label{vLinearCondition}\\
  {h}(x,y,t)=&\tilde{h}_0(t)+\tilde{h}_x(t)x+\tilde{h}_y(t)y+\tfrac{1}{2}\tilde{h}_{xx}(t)x^2+\tfrac{1}{2}\tilde{h}_{yy}(t)y^2+\tilde{h}_{xy}(t)xy, \hskip 10mm 
\label{hLinearCondition}
    \end{align}
\end{subequations}
where the coefficients $\tilde{u}_0(t)$, $\tilde{u}_x(t)$, $\tilde{u}_y(t)$, $\tilde{v}_0(t)$, $\tilde{v}_x(t)$, $\tilde{v}_y(t)$, $\tilde{h}_0(t)$, $\tilde{h}_x(t)$, $\tilde{h}_y(t)$, $\tilde{h}_{xx}(t)$, $\tilde{h}_{yy}(t)$, and $\tilde{h}_{xy}(t)$ are time-dependent.
\end{theorem}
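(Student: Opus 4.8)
The plan is to obtain the statement as an essentially immediate corollary of Lemma~\ref{ADM:PartialDerivativeCoeffLemma} combined with the series representation \eqref{eq:ADM_series}--\eqref{eq:partial_sum}. Under the hypothesis (\ref{PartialDerivIC_x_Cond}-c), the lemma forces every decomposed component to inherit the vanishing-derivative conditions (\ref{PartialDeriv_x_Cond}-c): for each $n\in\mathbb{N}$ and each fixed $t$, the functions $u_n(\cdot,\cdot,t)$ and $v_n(\cdot,\cdot,t)$ are affine in $(x,y)$, and $h_n(\cdot,\cdot,t)$ is a polynomial of degree at most two in $(x,y)$, with the coefficients depending on $t$ (trivially so for $n=0$).

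First I would record the elementary structural fact underlying the representation (\ref{uLinearCondition}-c): a smooth function on $\mathbb{R}^2$ all of whose second-order partial derivatives vanish identically has the form $c_0+c_x x+c_y y$, and one all of whose third-order partials vanish has the form $c_0+c_x x+c_y y+\tfrac12 c_{xx}x^2+\tfrac12 c_{yy}y^2+c_{xy}xy$; both follow from Taylor's theorem with the corresponding remainders vanishing. Applying this componentwise, the partial sums in \eqref{eq:partial_sum} are finite sums of affine (respectively, at most quadratic) polynomials, so $S_N(u)$ and $S_N(v)$ are affine in $(x,y)$ and $S_N(h)$ is at most quadratic in $(x,y)$, uniformly in $N$, with time-dependent coefficients given by summing the coefficients of the components.

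Next I would pass to the limit $N\to\infty$. For each fixed $t$ the functions $S_N(u)(\cdot,\cdot,t)$ live in the finite-dimensional space of affine functions on $\mathbb{R}^2$, so convergence of the ADM series at three points in general position (say $(0,0)$, $(1,0)$, $(0,1)$) already forces the three coefficient sequences to converge; hence $u(\cdot,\cdot,t)$ is affine and is given by \eqref{uLinearCondition}. The identical argument applied to $v$ yields \eqref{vLinearCondition}, and applied to $h$ at six points in general position yields the at-most-quadratic form \eqref{hLinearCondition}. Differentiating these explicit expressions then gives (\ref{PartialCondition_u}-c) directly, and the coefficients $\tilde u_0(t),\dots,\tilde h_{xy}(t)$ are precisely the summed component coefficients $\sum_n \tilde u_{0,n}(t)$, etc.

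The one genuine input needed is convergence: one must know that the ADM series \eqref{eq:ADM_series} converges pointwise in $(x,y)$ for each $t$ in some interval, so that the passage from the partial sums to the limit is legitimate. I would therefore phrase the conclusion on the scheme's interval of convergence (equivalently, for the resummed solution), which is the natural setting here and which sidesteps any need to justify term-by-term differentiation of the series --- the finite-dimensionality argument above makes pointwise convergence alone sufficient. Everything else is bookkeeping; I do not anticipate a substantive obstacle beyond making the convergence hypothesis explicit.
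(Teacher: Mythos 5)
Your proposal is correct, and it rests on the same engine as the paper's proof --- Lemma~\ref{ADM:PartialDerivativeCoeffLemma} applied to every term of the ADM series \eqref{eq:ADM_series} --- but it executes the final step in the reverse order and with more care. The paper first asserts the derivative conditions (\ref{PartialCondition_u}-c) for the full solution by ``applying the lemma to each component,'' which tacitly interchanges $\upartial^2/\upartial x^2$ (and the third derivatives for $h$) with the infinite sum over $n$, and only then integrates those conditions to recover the affine/quadratic forms (\ref{uLinearCondition}-c) via successive antiderivatives in $x$ and $y$. You instead convert each component to polynomial form at the level of the lemma (Taylor's theorem componentwise), observe that the partial sums \eqref{eq:partial_sum} stay in the finite-dimensional spaces of affine and at-most-quadratic polynomials, and pass to the limit using the fact that pointwise convergence at finitely many points in general position forces coefficient convergence; the derivative conditions then follow by differentiating the explicit limit. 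What this buys is a rigorous justification of the limit interchange that the paper performs implicitly, at the price of only a pointwise-convergence hypothesis on the ADM series --- a hypothesis the paper also needs but does not state. Your remark that the conclusion should be phrased on the scheme's interval of convergence is a genuine sharpening rather than a deviation; the two arguments are otherwise equivalent in content.
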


\begin{proof}
  Applying Lemma \ref{ADM:PartialDerivativeCoeffLemma} to each component in \eqref{eq:ADM_series} yields
(\ref{PartialCondition_u}-c).
  From \eqref{PartialCondition_u}, we observe that 
\begin{align}
   \frac{\upartial^2 u(x,y,t)}{\upartial x^2} = 0 \hskip 5mm  \mbox{yields} \hskip 5mm  u(x,y,t)=C_1(y,t) x+C_2(y,t),\nonumber
\end{align}
where the integration constants, $C_1(y,t)$ and $C_2(y,t)$, are independent of $x$. Similarly, we have 
\begin{align}
   \frac{\upartial^2 u(x,y,t)}{\upartial x \upartial y} = 0  \hskip 5mm\mbox{yields}  \hskip 5mm  C_1(y,t) = \tilde{u}_x(t),\nonumber
\end{align}
and 
\begin{align}
\frac{\upartial^2 u(x,y,t)}{\upartial y^2} = 0 \hskip 5mm\mbox{yields}\hskip 5mm C_2(y,t)=\tilde{u}_y(t) y+\tilde{u}_0(t),\nonumber
\end{align}
and thus \eqref{uLinearCondition} is achieved. Similar arguments can be made to achieve (\ref{vLinearCondition},c), respectively. 
\end{proof}

\noindent We note the significance of Theorem \ref{thm:ADM_Ansatz_Connection}. In the works of  \cite{thacker1981some,shapiro1996nonlinear}, and \cite{matskevich2019exact} equations (\ref{uLinearCondition}-c)
were presented as \emph{ansatz solutions}, where they were also used to produce the  reduced system of shallow water equations to derive closed-form solutions. This theorem removes these assumptions and provides more insight to this behaviour by connecting it to the initial conditions (\ref{PartialDerivIC_x_Cond}-c).

\section{Novel exact solutions for flat bottom topographies with constant Coriolis force}
\label{sec:new_exact}

Next, we use the ADM construction to derive new families of solutions and their properties that describe other geophysical flows such as inertial oscillations and anticyclonic vortices which have a profound effect on oceanic and atmospheric dynamics \citep{vallis2017atmospheric, kafiabad2021interaction}. Here, we consider flows over flat bottom topologies where $D_0=0$ in \eqref{eq:idealBottomForm} with constant Coriolis parameter ($\bar{f} \ne 0$).

\subsection{Inertial oscillations in geostrophic flows}
\label{subsec:new_rotating}

For these types of flows, our analysis considers the following initial conditions.  
\begin{itemize}
	\item Condition I 
	\label{CondI}
	\begin{align}
		u_0(x,y)=v_0(x,y)=0,\hskip 5mm	h_0(x,y)= \eta_x x + \eta_y y,
	\label{eq:CondI_IC}
	\end{align}
	
	\item Condition II
	\label{CondII}
	\begin{align}
		u_0(x,y)=v_0(x,y)=0,\hskip 5mm h_0(x,y)= \eta_x x,
	\label{eq:CondII_IC}
	\end{align}
	
	\item Condition III
	\label{CondIII}
	\begin{align}
		u_0(x,y)=v_0(x,y)=0,\hskip 5mm h_0(x,y)= \eta_y y,
	\label{eq:CondIII_IC}
	\end{align}
\end{itemize}
where $\eta_x$ and $\eta_y$ are the respective constant free surface gradients in the $x$ and $y$ directions. We note that the behaviour of the initial conditions \eqref{eq:CondI_IC} - \eqref{eq:CondIII_IC} affect the decomposition of the decomposed functions of $u$, $v$, and $h$ as presented in the following lemma. 

\begin{lemma}
	Let $\{u_n(x,y,t)\}$, $\{v_n(x,y,t)\}$, $\{h_n(x,y,t)\}$ be the sequence of decomposed functions of $u$, $v$, and $h$ such that their relationship is defined by \eqref{eq:ADM_iter_Linv} (for $n \in \mathbb{N}$). If $D= 0$ and the initial conditions $u_0(x,y)$, $v_0(x,y)$, $h_0(x,y)$ satisfy the following properties
\begin{subequations}
    \begin{align}
        	\frac{\upartial u_0(x,y)}{\upartial x} =& \frac{\upartial v_0(x,y)}{\upartial x} = \frac{\upartial^2 h_0(x,y)}{\upartial x^2} = 0,
		\label{PartialDerivIC_x_Cond_sin}
	\\
 	\frac{\upartial u_0(x,y)}{\upartial y} =& \frac{\upartial v_0(x,y)}{\upartial y} = \frac{\upartial^2 h_0(x,y)}{\upartial y^2} = 0,
		\label{PartialDerivIC_y_Cond_sin}
	\\
 	\frac{\upartial^2 h_0(x,y)}{\upartial x \upartial y} =& 0.
		\label{PartialDerivIC_mixedxy_Cond_sin}
    \end{align}
\end{subequations}
Then the higher order components $u_n(x,y,t)$, $v_n(x,y,t)$, $h_n(x,y,t)$ also satisfy the property that  
\begin{subequations}
    \begin{align}
\frac{\upartial u_n(x,y,t)}{\upartial x} =& \frac{\upartial v_n(x,y,t)}{\upartial x} = \frac{\upartial h_n(x,y,t)}{\upartial x} = 0,\label{PartialDeriv_x_Cond_sin}
	\\
 \frac{\upartial u_n(x,y,t)}{\upartial y} =& \frac{\upartial v_n(x,y,t)}{\upartial y} = \frac{\upartial h_n(x,y,t)}{\upartial y} = 0\label{PartialDeriv_y_Cond_sin}
    \end{align}
\end{subequations}
for $n \in \mathbb{N}^{+}$.
		\label{ADM:PartialDerivativeCoeffLemma_sin}
\end{lemma}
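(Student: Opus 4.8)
The plan is to prove the lemma by (strong) induction on $n$, paralleling the proof of Lemma~\ref{ADM:PartialDerivativeCoeffLemma} but now tracking first-order rather than second-order spatial derivatives. Observe first that \eqref{PartialDerivIC_x_Cond_sin}–\eqref{PartialDerivIC_mixedxy_Cond_sin} force $u_0$ and $v_0$ to be constants in $(x,y)$ and $h_0$ to be affine, $h_0=\tilde h_0+\eta_x x+\eta_y y$, so that $\partial_x h_0\equiv\eta_x$ and $\partial_y h_0\equiv\eta_y$ are constants. The claim I would establish by induction is the slightly stronger statement that, for every $n\in\mathbb{N}^{+}$, each of $u_n,v_n,h_n$ is a function of $t$ alone; this immediately yields \eqref{PartialDeriv_x_Cond_sin}–\eqref{PartialDeriv_y_Cond_sin}.

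For the base case $n=1$, substitute $n=0$ into \eqref{eq:ADM_iter_Linv} with $D=0$. Since $A_0(u,\partial_x u)=u_0\,\partial_x u_0=0$ and $A_0(v,\partial_y u)=v_0\,\partial_y u_0=0$ (and similarly for the $v$-equation), the braces in the $u_1$- and $v_1$-equations reduce to the constants $-F^{-2}\eta_x+\bar f v_0$ and $-F^{-2}\eta_y-\bar f u_0$; for $h_1$ the brace is $-\partial_x(u_0 h_0)-\partial_y(v_0 h_0)=-u_0\eta_x-v_0\eta_y$, again constant. Applying $L_t^{-1}=\int_0^t(\cdot)\,{\rm d}\tau$ then produces functions of $t$ only.

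For the inductive step, assume $u_k,v_k,h_k$ are functions of $t$ alone for $1\le k\le n$, with $n\ge1$. In the $u_{n+1}$-equation, $A_n(u,\partial_x u)=\sum_{j=0}^{n}u_j\,\partial_x u_{n-j}$ vanishes term by term, since $\partial_x u_{n-j}=0$ either by the induction hypothesis (when $n-j\ge1$) or by \eqref{PartialDerivIC_x_Cond_sin} (when $j=n$); likewise $A_n(v,\partial_y u)=0$, while $\partial_x h_n=0$ and $v_n$ depends only on $t$ by hypothesis. Hence $u_{n+1}$ is a function of $t$ alone, and the $v_{n+1}$-equation is handled identically. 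For $h_{n+1}$ (recall $D=0$), expand $\partial_x A_n(u,h)=\sum_{j=0}^{n}\bigl[(\partial_x u_j)h_{n-j}+u_j(\partial_x h_{n-j})\bigr]$: the first group vanishes because every $\partial_x u_j=0$, and in the second group only the $j=n$ term survives, contributing $u_n\,\partial_x h_0=\eta_x u_n$, a function of $t$; similarly $\partial_y A_n(v,h)=\eta_y v_n$. Thus the brace in the $h_{n+1}$-equation equals $-\eta_x u_n-\eta_y v_n$, so $h_{n+1}$ depends on $t$ alone. The statements involving $\partial_y$ follow by interchanging $x$ and $y$ throughout, which closes the induction.

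These manipulations are routine; the only delicate point — and the reason $h_0$ is constrained through a second derivative while the conclusion for $h_n$ with $n\ge1$ is the stronger first-derivative vanishing — is the $j=n$ term of the Adomian polynomials, where the nonzero gradient of the affine initial height re-enters. The key observation is that this gradient always appears multiplied by $u_n$ or $v_n$ (or, in the momentum equations, simply as the constant $\partial h_0$), so there is no mechanism to generate genuine $x$- or $y$-dependence in the higher-order components and the ``$t$-only'' property propagates.
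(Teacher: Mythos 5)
Your proof is correct and follows essentially the same route as the paper: induction on the ADM recursion \eqref{eq:ADM_iter_Linv}, using the term-by-term vanishing of the Adomian polynomials together with the affine form of $h_0$. The only (cosmetic) difference is that you carry the logically equivalent but cleaner induction hypothesis that $u_n,v_n,h_n$ depend on $t$ alone, whereas the paper differentiates the recursion and tracks the vanishing of the first spatial derivatives directly.
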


\begin{proof}
	This is proven via mathematical induction by examining the recursion relationships for $u$, $v$, and $h$ in \eqref{eq:ADM_iter_Linv}. Condition \eqref{PartialDeriv_x_Cond_sin} is demonstrated by examining the following relationships
\begin{subequations}
    \begin{align}
        	\frac{\upartial u_{n+1}}{\upartial x}=&-{L}_t^{-1}\left\{\frac{\upartial}{\upartial x}\left[A_n\left(u,\frac{\upartial u}{\upartial x}\right) + A_n\left(v,\frac{\upartial u}{\upartial y}\right) \right] + \frac{1}{F^2}\frac{\upartial^{2} h_{n}}{\upartial x^{2}} - \bar{f}\, \frac{\upartial v_n}{\upartial x}\right\}, \hskip8mm
		\label{un_xx_sin}\\
  	\frac{\upartial v_{n+1}}{\upartial x} =&-{L}_t^{-1}\left\{\frac{\upartial}{\upartial x} \left[A_n\left(u,\frac{\upartial v}{\upartial x}\right) + A_n\left(v,\frac{\upartial v}{\upartial y}\right) \right] + \frac{1}{F^2}\frac{\upartial^{2} h_{n}}{\upartial x \upartial y} + \bar{f} \, \frac{\upartial u_n}{\upartial x} \right\},
		\label{vn_xx_sin}\\
  \frac{\upartial h_{n+1}}{\upartial x} =& - {L}_t^{-1}\left\{\frac{\upartial^{2} }{\upartial x^{2}}[A_n(u,h)] + \frac{\upartial^{2} }{\upartial x\upartial y}[A_n(v,h)]\right\}.
		\label{hn_xxx_sin}
    \end{align}
\end{subequations}
Therefore, when $n = 0$, equations (\ref{un_xx_sin}-c)
representing the relationship between the initial and first components for $u$, $v$, and $h$ become
\begin{subequations}
    \begin{align}
        \frac{\upartial u_{1}}{\upartial x}=&-{L}_t^{-1}\left\{\frac{\upartial}{\upartial x}\left[A_0\left(u,\frac{\upartial u}{\upartial x}\right) + A_0\left(v,\frac{\upartial u}{\upartial y}\right) \right] + \frac{1}{F^2}\frac{\upartial^{2} h_{0}}{\upartial x^{2}} - \bar{f}\, \frac{\upartial v_0}{\upartial x}\right\},\hskip8mm
		\label{u1_xx_sin}\\
  	\frac{\upartial v_{1}}{\upartial x} =&-{L}_t^{-1}\left\{\frac{\upartial}{\upartial x} \left[A_0\left(u,\frac{\upartial v}{\upartial x}\right) + A_0\left(v,\frac{\upartial v}{\upartial y}\right) \right] + \frac{1}{F^2}\frac{\upartial^{2} h_{0}}{\upartial x \upartial y} + \bar{f} \, \frac{\upartial u_0}{\upartial x} \right\},
		\label{v1_xx_sin}\\
  	\frac{\upartial h_{1}}{\upartial x} =& - {L}_t^{-1}\left\{\frac{\upartial^{2} }{\upartial x^{2}}[A_0(u,h)] + \frac{\upartial^{2} }{\upartial x\upartial y}[A_0(v,h)]\right\}.
		\label{h1_xxx_sin}
    \end{align}
\end{subequations}
Employing (\ref{PartialDerivIC_x_Cond_sin}-c)
it can be shown that equations (\ref{u1_xx_sin}-c)
reduce to the following relationship
	\begin{equation}
		\frac{\upartial u_1(x,y,t)}{\upartial x} = \frac{\upartial v_1(x,y,t)}{\upartial x} = \frac{\upartial h_1(x,y,t)}{\upartial x} = 0,\nonumber
	\end{equation}
and continuing this argument for $n \in \mathbb{N}^{+}$ yields equation  \eqref{PartialDeriv_x_Cond_sin}. Following similar arguments yields \eqref{PartialDeriv_y_Cond_sin}.
\end{proof}

\noindent From this, the behaviour of uniform $u$, $v$ over space, and planar free surface $h$ with constant spatial gradients over time can be summarised in the following theorem. 

\begin{theorem}
	\label{thm:ADM_order_n_sin}
	Let $\{u_n(x,y,t)\}$, $\{v_n(x,y,t)\}$, $\{h_n(x,y,t)\}$ be the sequence of decomposed functions of $u$, $v$, and $h$, where their relationship is defined by \eqref{eq:ADM_iter_Linv} (for $n \in \mathbb{N}$). If $D=0$ and the initial conditions $u_0(x,y)$, $v_0(x,y)$, $h_0(x,y)$ satisfy the properties defined in \rm{(\ref{PartialDerivIC_x_Cond_sin}-c)},
then the solutions $u$, $v$, and $h$ have the following properties 
\begin{subequations}
\label{ParticalCondition_all}
    \begin{align}
        \frac{\upartial u(x,y,t)}{\upartial x} = &\frac{\upartial u(x,y,t)}{\upartial y} = 0,
		\label{PartialCondition_u_sin}\\
  	\frac{\upartial v(x,y,t)}{\upartial x} =& \frac{\upartial v(x,y,t)}{\upartial y} 
		= 0,
		\label{PartialCondition_v_sin}\\
  	\frac{\upartial h(x,y,t)}{\upartial x}=& \frac{\upartial h(x,y,0)}{\upartial x},\\
   \frac{\upartial h(x,y,t)}{\upartial y}=& \frac{\upartial h(x,y,0)}{\upartial y},
		\label{ParticalCondition_h_sinA}\\
  	\frac{\upartial^2 h(x,y,t)}{\upartial x^2} =& \frac{\upartial^2 h(x,y,t)}{\upartial x\upartial y}=\frac{\upartial^2 h(x,y,t)}{\upartial y^2}=0. 
		\label{ParticalCondition_h_sinB}
    \end{align}
\end{subequations}
 Additionally, $u$, $v$, and $h$ are reduced to the following forms 
 \begin{subequations}
     \begin{align}
         {u}(x,y,t)=&\tilde{u}_0(t),
		\label{uLinearCondition_sin}\\
  	{v}(x,y,t)=&\tilde{v}_0(t),
		\label{vLinearCondition_sin}
	\\
 	{h}(x,y,t)=&\tilde{h}_0(t)+\tilde{h}_x x+\tilde{h}_yy,
		\label{hLinearCondition_sin}
     \end{align}
 \end{subequations}
 where the coefficients $\tilde{u}_0(t)$, $\tilde{v}_0(t)$, and  $\tilde{h}_0(t)$ are time-dependent, while $\tilde{h}_x$ and $\tilde{h}_y$ are constants. Additionally, (\ref{uLinearCondition_sin}-c)
 satisfy the reduced system of equations
	\begin{subequations}
	\label{eq:ODE_sin}
      \begin{align}
		\frac{\rm d}{{\rm d}t}\tilde{u}_0(t)=&-\frac{1}{F^2}\tilde{h}_x +\bar{f}\tilde{v}_0(t), \\
		\frac{\rm d}{{\rm d}t}\tilde{v}_0(t)=&-\frac{1}{F^2}\tilde{h}_y -\bar{f} \tilde{u}_0(t),\;\; \\
		\frac{\rm d}{{\rm d}t}\tilde{h}_0(t)=&-\tilde{u}_0(t) \tilde{h}_x -\tilde{v}_0(t)\tilde{h}_y. 
	\end{align}
 	\end{subequations}
\end{theorem}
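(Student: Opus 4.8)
The plan is to bootstrap from Lemma~\ref{ADM:PartialDerivativeCoeffLemma_sin} to the full decomposition~\eqref{eq:ADM_series}, and then to substitute the resulting reduced ansatz into the governing system~\eqref{eq:SWE_non_dimen} with $D\equiv 0$. First I would record what the hypotheses (\ref{PartialDerivIC_x_Cond_sin}-c) say about the data: the vanishing of $\upartial_x u_0$, $\upartial_y u_0$, $\upartial_x v_0$, $\upartial_y v_0$ forces $u_0$ and $v_0$ to be constants, while the vanishing of every second spatial derivative of $h_0$ forces $h_0(x,y)=\tilde h_0(0)+\tilde h_x x+\tilde h_y y$ for constants $\tilde h_x$, $\tilde h_y$. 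By Lemma~\ref{ADM:PartialDerivativeCoeffLemma_sin}, every higher component $u_n$, $v_n$, $h_n$ with $n\ge 1$ has vanishing first spatial derivatives and is therefore a function of $t$ alone. Summing~\eqref{eq:ADM_series} term by term then gives $\upartial_x u=\upartial_x u_0=0$ and $\upartial_y u=0$, and likewise for $v$; for $h$ only the $n=0$ term contributes to any spatial derivative, so $\upartial_x h=\tilde h_x$ and $\upartial_y h=\tilde h_y$ equal the initial gradients and are time-independent, while all second spatial derivatives of $h$ vanish. This establishes the identities~\eqref{ParticalCondition_all}.

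Next I would integrate these pointwise identities in $x$ and $y$, exactly as in the proof of Theorem~\ref{thm:ADM_Ansatz_Connection}: $\upartial_x u=\upartial_y u=0$ yields $u=\tilde u_0(t)$, similarly $v=\tilde v_0(t)$, and the vanishing of the second spatial derivatives of $h$ together with the constancy of $\upartial_x h=\tilde h_x$ and $\upartial_y h=\tilde h_y$ yields $h=\tilde h_0(t)+\tilde h_x x+\tilde h_y y$, which are~\eqref{uLinearCondition_sin}--\eqref{hLinearCondition_sin}. Substituting this reduced form into~\eqref{eq:SWE_non_dimen} with $D\equiv0$ then finishes the argument. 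Because $\upartial_x u=\upartial_y u=\upartial_x v=\upartial_y v=0$, the advective terms in the two momentum equations drop out and they collapse to $\dot{\tilde u}_0=-F^{-2}\tilde h_x+\bar f\tilde v_0$ and $\dot{\tilde v}_0=-F^{-2}\tilde h_y-\bar f\tilde u_0$. For the continuity equation, since $D=0$ and $\upartial_x u=\upartial_y v=0$ we have $\upartial_x[u(h+D)]+\upartial_y[v(h+D)]=u\,\upartial_x h+v\,\upartial_y h=\tilde u_0\tilde h_x+\tilde v_0\tilde h_y$, hence $\dot{\tilde h}_0=-\tilde u_0\tilde h_x-\tilde v_0\tilde h_y$. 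Collecting these three ordinary differential equations is precisely~\eqref{eq:ODE_sin}.

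The step I expect to require the most care is the interchange of summation and differentiation when passing from the componentwise conclusion of Lemma~\ref{ADM:PartialDerivativeCoeffLemma_sin} to the statement about $u$, $v$, $h$ themselves. I would handle this as elsewhere in the paper by arguing at the level of the partial sums $S_N(u)$, $S_N(v)$, $S_N(h)$ of~\eqref{eq:partial_sum}: each is a finite sum and hence trivially has the claimed structure (constant in space for $u$ and $v$; affine in space with the prescribed, time-independent gradients for $h$), and this structure is preserved in the limit $N\to\infty$. Once the reduced form~\eqref{uLinearCondition_sin}--\eqref{hLinearCondition_sin} is in hand, the derivation of~\eqref{eq:ODE_sin} is a direct substitution with no convergence subtleties, and everything else amounts to routine differentiation.
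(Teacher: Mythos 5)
Your proposal is correct and follows essentially the same route as the paper: apply Lemma~\ref{ADM:PartialDerivativeCoeffLemma_sin} to the components of the series~\eqref{eq:ADM_series} to obtain~\eqref{ParticalCondition_all}, integrate those identities to get the reduced forms \eqref{uLinearCondition_sin}--\eqref{hLinearCondition_sin}, and substitute into~\eqref{eq:SWE_non_dimen} to obtain~\eqref{eq:ODE_sin}. Your added observations --- that the hypotheses force $u_0$, $v_0$ constant and $h_0$ affine so that only the $n=0$ term carries spatial dependence, and that the term-by-term differentiation is justified at the level of the partial sums~\eqref{eq:partial_sum} --- are consistent refinements of the paper's argument rather than a different approach.
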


\begin{proof}
  Applying Lemma \ref{ADM:PartialDerivativeCoeffLemma_sin} to each component in \eqref{eq:ADM_series} yields (\ref{ParticalCondition_all}).
  From \eqref{PartialCondition_u_sin}, we observe that 
	\begin{align}
		\frac{\upartial u(x,y,t)}{\upartial x} = 0 \hskip 5mm  \mbox{yields} \hskip 5mm u(x,y,t)=C_1(y,t),
	\nonumber
	\end{align}
where the integration constants, $C_1(y,t)$, are independent of $x$. Similarly, we have 
	\begin{align}
		\frac{\upartial u(x,y,t)}{ \upartial y} = 0 \hskip 5mm \mbox{yields} \hskip 5mm C_1(y,t) = \tilde{u}_0(t)
	\nonumber	
	\end{align}
        and thus \eqref{uLinearCondition_sin} is achieved. Similar arguments can be made to achieve (\ref{vLinearCondition_sin},c),
        respectively. Substituting (\ref{uLinearCondition_sin}-c)
        into \eqref{eq:SWE_non_dimen} achieves the reduced system of equations (\ref{eq:ODE_sin}), which completes the proof.
\end{proof}

\noindent Hence, we have the following results for inertial oscillations for geostrophic flows.

\begin{theorem}
	\label{thm:GeneralInertialOscillations}
	Given inertial oscillations over flat bottom topographies with constant Coriolis parameter $\bar{f} \ne 0$, where the initial behaviour is defined by \eqref{eq:CondI_IC}. The solutions $u$, $v$, and $h$ are expressed as 
\begin{subequations}
    \begin{align}
        u(x,y,t)=&-\frac{\eta_x}{\bar{f}F^2}\sin\left(\bar{f}t\right)-\frac{\eta_y}{\bar{f}F^2}\left[1-\cos\left(\bar{f}t \right)\right],
		\label{eq:exactCase12_u}\\
  	v(x,y,t)=&\frac{\eta_x}{\bar{f}F^2}\left[1-\cos\left(\bar{f}t\right)\right]-\frac{\eta_y}{\bar{f}F^2}\sin(\bar{f}t),
		\label{eq:exactCase12_v}\\
  	h(x,y,t)=&\frac{\eta_x^2}{\bar{f}^2F^2} \left[1-\cos \left(\bar{f}t \right) \right]+x\eta_x+\frac{\eta_y^2}{\bar{f}^2F^2} \left[1-\cos \left(\bar{f}t \right) \right]+\eta_y y,  
		\label{eq:exactCase12_h}
    \end{align}
\end{subequations}
 
	\noindent where $\eta_x$ and $\eta_y$ are the constant free surface gradients in the $x$ and $y$ directions, respectively.
\end{theorem}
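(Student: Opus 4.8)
The plan is to reduce Theorem~\ref{thm:GeneralInertialOscillations} to an elementary linear ODE problem via Theorem~\ref{thm:ADM_order_n_sin} and then integrate explicitly. First I would verify that the initial data of Condition~I, equation~\eqref{eq:CondI_IC}, satisfy the hypotheses (\ref{PartialDerivIC_x_Cond_sin}-c) of Theorem~\ref{thm:ADM_order_n_sin}: since $u_0=v_0=0$ all of their first derivatives vanish, and since $h_0=\eta_x x+\eta_y y$ is affine we have $\partial^2 h_0/\partial x^2=\partial^2 h_0/\partial y^2=\partial^2 h_0/\partial x\partial y=0$. Hence Theorem~\ref{thm:ADM_order_n_sin} applies with the identifications $\tilde h_x=\eta_x$ and $\tilde h_y=\eta_y$ (both constant), with initial values $\tilde u_0(0)=\tilde v_0(0)=0$ and $\tilde h_0(0)=0$ (the latter because $h_0$ has no constant term), and the solution necessarily takes the reduced form (\ref{uLinearCondition_sin}-c) with scalar functions $\tilde u_0(t),\tilde v_0(t),\tilde h_0(t)$ obeying the system~\eqref{eq:ODE_sin}.

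Second, I would solve \eqref{eq:ODE_sin}. The essential point is that the pair $(\tilde u_0,\tilde v_0)$ decouples from $\tilde h_0$ and forms a linear system with constant forcing. Setting $w(t)=\tilde u_0(t)+i\,\tilde v_0(t)$ collapses the first two equations of \eqref{eq:ODE_sin} into the scalar ODE $\dot w + i\bar f w = -(\eta_x+i\eta_y)/F^2$, whose solution with $w(0)=0$ is $w(t)=\bigl[(-\eta_y+i\eta_x)/(\bar f F^2)\bigr]\bigl(1-e^{-i\bar f t}\bigr)$. Writing $e^{-i\bar f t}=\cos(\bar f t)-i\sin(\bar f t)$ and separating real and imaginary parts yields exactly \eqref{eq:exactCase12_u} and \eqref{eq:exactCase12_v}. (Equivalently one may eliminate $\tilde v_0$ to obtain $\ddot{\tilde u}_0+\bar f^2\tilde u_0=-\bar f\eta_y/F^2$ and solve it with $\tilde u_0(0)=0$, $\dot{\tilde u}_0(0)=-\eta_x/F^2$.)

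Third, I would substitute the resulting $\tilde u_0,\tilde v_0$ into the third equation of \eqref{eq:ODE_sin}. The cross terms proportional to $\eta_x\eta_y$ cancel, leaving $\dot{\tilde h}_0(t)=\bigl[(\eta_x^2+\eta_y^2)/(\bar f F^2)\bigr]\sin(\bar f t)$; integrating with $\tilde h_0(0)=0$ gives $\tilde h_0(t)=\bigl[(\eta_x^2+\eta_y^2)/(\bar f^2 F^2)\bigr]\bigl(1-\cos(\bar f t)\bigr)$, and then $h(x,y,t)=\tilde h_0(t)+\eta_x x+\eta_y y$ is precisely \eqref{eq:exactCase12_h}. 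As an independent consistency check one can substitute \eqref{eq:exactCase12_u}--\eqref{eq:exactCase12_h} back into \eqref{eq:SWE_non_dimen} with $D=0$.

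I do not expect a substantive obstacle: the only point deserving a remark is that Theorem~\ref{thm:ADM_order_n_sin} supplies the \emph{form} of the solution and the governing ODE system, after which one should note that \eqref{eq:ODE_sin} with the prescribed initial data has a unique (globally defined, since linear) solution, so the explicit formulas obtained are \emph{the} solution and not merely \emph{a} solution. The remainder is routine trigonometric bookkeeping, chiefly keeping track of signs and of the prefactors $1/(\bar f F^2)$ and $1/(\bar f^2 F^2)$.
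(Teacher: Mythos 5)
Your proposal is correct and follows essentially the same route as the paper's proof: verify that Condition I satisfies the hypotheses of Theorem~\ref{thm:ADM_order_n_sin}, identify $\tilde h_x=\eta_x$, $\tilde h_y=\eta_y$ with zero initial values for $\tilde u_0$, $\tilde v_0$, $\tilde h_0$, and solve the reduced system \eqref{eq:ODE_sin}. The only difference is that you spell out the integration (via the complex combination $w=\tilde u_0+{\rm i}\,\tilde v_0$ and the cancellation of the $\eta_x\eta_y$ cross terms), which the paper leaves implicit in the phrase ``solving this reduced system.''
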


\begin{proof}
  The initial conditions  \eqref{eq:CondI_IC} satisfy (\ref{PartialDerivIC_x_Cond_sin}-c).
  Therefore, the sequence of decomposed functions $\{u_n(x,y,t)\}$, $\{v_n(x,y,t)\}$, $\{h_n(x,y,t)\}$ satisfy (\ref{PartialDeriv_x_Cond_sin},b)
  for $n \in \mathbb{N}^{+}$ which satisfies Lemma \ref{ADM:PartialDerivativeCoeffLemma_sin} and consequently Theorem \ref{thm:ADM_order_n_sin}. Examining the system of reduced equations \eqref{eq:ODE_sin}, the initial conditions \eqref{eq:CondI_IC} also produce the following reduced relationships: $\tilde{h}_x=\eta_x$, $\tilde{h}_y=\eta_y$, and $\tilde{u}(t=0)=\tilde{v}(t=0)=\tilde{h}_0(t=0)=0$. Solving this reduced system achieves (\ref{eq:exactCase12_u}-c)
  which proves the theorem.
\end{proof}

\begin{corollary}
	\label{cor:OneSidedInertialOscillations}
	Given inertial oscillations over flat bottom topographies with constant Coriolis parameter $\bar{f} \ne 0$. 
	
	\begin{enumerate}[(i)]
		\item If the initial behaviour is defined by \eqref{eq:CondII_IC}, then the solutions $u$, $v$, and $h$ are expressed as 
\begin{subequations}
\label{all_Cor41i}
    \begin{align}
        	u(x,y,t)=&-\frac{\eta_x}{\bar{f}F^2}\sin\left(\bar{f}t\right),\\
			v(x,y,t)=&\frac{\eta_x}{\bar{f}F^2}\left[1-\cos\left(\bar{f}t\right)\right],
		\label{uVector_Cor41i}\\
  h(x,y,t)=&\frac{\eta_x^2}{\bar{f}^2F^2}\left[1-\cos\left(\bar{f}t\right)\right]+x\eta_x.
		\label{h_Cor41i}
    \end{align}
\end{subequations}
  
		\item If the initial behaviour is defined by \eqref{eq:CondIII_IC}, then the solutions $u$, $v$, and $h$ are expressed as  
\begin{subequations}
    \begin{align}
        u(x,y,t)=&-\frac{\eta_y}{\bar{f}F^2}\left[1-\cos\left(\bar{f}t \right)\right],\\
			v(x,y,t)=&-\frac{\eta_y}{\bar{f}F^2}\sin\left(\bar{f}t \right),\\
   h(x,y,t)=&\frac{\eta_y^2}{\bar{f}^2F^2}\left[1-\cos \left(\bar{f}t \right)\right]+\eta_y y.
    \end{align}
\end{subequations}
  
	\end{enumerate}

	\noindent $\eta_x$ and $\eta_y$ are the constant free surface gradients in the $x$ and $y$ directions, respectively.
\end{corollary}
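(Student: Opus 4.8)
The plan is to obtain both parts of Corollary~\ref{cor:OneSidedInertialOscillations} as direct specialisations of Theorem~\ref{thm:GeneralInertialOscillations}. The key observation is that the Condition~II data \eqref{eq:CondII_IC} is exactly the Condition~I data \eqref{eq:CondI_IC} with $\eta_y=0$, and the Condition~III data \eqref{eq:CondIII_IC} is \eqref{eq:CondI_IC} with $\eta_x=0$. Since the right-hand sides of \eqref{eq:exactCase12_u}, \eqref{eq:exactCase12_v}, \eqref{eq:exactCase12_h} are trigonometric polynomials in $\eta_x$ and $\eta_y$ — in particular they depend analytically on these two parameters and no step of the derivation divides by $\eta_x$ or $\eta_y$ — setting one gradient to zero is not a singular limit but an honest sub-case, so the formulas carry over verbatim.

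Concretely, for part~(i) I would substitute $\eta_y=0$ into (\ref{eq:exactCase12_u}-c): the $\eta_y$-term in $u$ drops, leaving $u=-\frac{\eta_x}{\bar{f}F^2}\sin(\bar{f}t)$; the $\eta_y$-term in $v$ drops, leaving $v=\frac{\eta_x}{\bar{f}F^2}[1-\cos(\bar{f}t)]$; and the $\eta_y^2$ and $\eta_y y$ contributions to $h$ drop, leaving $h=\frac{\eta_x^2}{\bar{f}^2F^2}[1-\cos(\bar{f}t)]+x\eta_x$, which is exactly \eqref{all_Cor41i}. Symmetrically, for part~(ii) I would set $\eta_x=0$ in the same three formulas to recover the stated expressions for $u$, $v$, and $h$.

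Alternatively — and this is the version I would actually write out to keep the corollary self-contained — one can re-run the argument of Theorem~\ref{thm:GeneralInertialOscillations} directly. One checks that \eqref{eq:CondII_IC} and \eqref{eq:CondIII_IC} each satisfy the hypotheses (\ref{PartialDerivIC_x_Cond_sin}-c) of Lemma~\ref{ADM:PartialDerivativeCoeffLemma_sin}, so Theorem~\ref{thm:ADM_order_n_sin} applies and $u$, $v$, $h$ take the reduced forms (\ref{uLinearCondition_sin}-c) obeying the ODE system \eqref{eq:ODE_sin}. Reading off the constants, Condition~II gives $\tilde{h}_x=\eta_x$, $\tilde{h}_y=0$ with vanishing initial data for $\tilde{u}_0,\tilde{v}_0,\tilde{h}_0$, while Condition~III gives $\tilde{h}_x=0$, $\tilde{h}_y=\eta_y$; solving the resulting linear constant-coefficient system — the $(\tilde{u}_0,\tilde{v}_0)$ subsystem is a forced rotation at frequency $\bar{f}$, after which $\tilde{h}_0$ follows by a single quadrature — yields the claimed solutions.

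There is essentially no hard step here: the mathematical content lies entirely in Theorem~\ref{thm:GeneralInertialOscillations} (or, one level down, in Theorem~\ref{thm:ADM_order_n_sin} together with Lemma~\ref{ADM:PartialDerivativeCoeffLemma_sin}). The only point meriting a sentence of justification is the regularity of the $\eta_x\to 0$ and $\eta_y\to 0$ reductions, which is immediate from the closed-form, polynomial-in-the-gradients structure of \eqref{eq:exactCase12_u}--\eqref{eq:exactCase12_h}; I would therefore keep the proof to a few lines, stating the substitution and exhibiting the resulting expressions.
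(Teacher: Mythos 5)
Your proposal is correct and matches the paper's proof, which likewise obtains the corollary as the special case of Theorem~\ref{thm:GeneralInertialOscillations} with $\eta_y=0$ (part i) and $\eta_x=0$ (part ii). The extra remarks on regularity of the substitution and the alternative self-contained derivation are sound but not needed, since the paper's one-line specialisation argument suffices.
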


\begin{proof}
	This is a special case of Theorem \ref{thm:GeneralInertialOscillations} for $\eta_y=0$ and $\eta_x=0$, respectively. 
\end{proof}

\par Theorem \ref{thm:GeneralInertialOscillations} and Corollary \ref{cor:OneSidedInertialOscillations} show the explicit relationship between these types of  flows with respect to the constant Coriolis parameter, the  free surface gradients, and the Froude number where the inertial oscillation frequency is defined by the constant Coriolis parameter $\bar{f}$. These results also demonstrate that these oscillations are based on the magnitude of the free surface gradients that depend on the initial behaviour and the geostrophic flows, which are consistent with the results of   \citep{vallis2017atmospheric}. Moreover, Theorem \ref{thm:GeneralInertialOscillations} describes these types of oscillations as the interaction between the geostrophic flow fluctuations and the free surface gradients, where Corollary \ref{cor:OneSidedInertialOscillations} considers cases when these gradients are negligible in the $x$ and $y$ directions. 

\subsection{Anticyclonic vortices with finite escape times}
\label{subsec:new_unbounded_shear}
For these types of flows our analysis considers the following initial conditions 
\begin{itemize}
	\item Condition IV
	\begin{align}
		u_0(x,y)=\bar{f}y, \hskip 5mm  v_0(x,y)=0,  \hskip 5mm h_0(x,y)= h_0,
		\label{eq:CondIV_IC}
	\end{align}
	\item Condition V
	\begin{align}
		u_0(x,y)=\bar{f}y,\hskip 5mm  v_0(x,y)=-\bar{f}x + \bar{f}y ,\hskip 5mm  h_0(x,y)= h_0,
		\label{eq:CondV_IC}
	\end{align}
	\item Condition VI
	\begin{align}
		u_0(x,y)=0, \hskip 5mm v_0(x,y)=-\bar{f}x,  \hskip 5mm h_0(x,y)= h_0,
		\label{eq:CondVI_IC}
	\end{align}
	\item Condition VII
	\begin{align}
		u_0(x,y)= \bar{f}x + \bar{f}y, \hskip 5mm v_0(x,y)=-\bar{f}x,  \hskip 5mm h_0(x,y)= h_0,
		\label{eq:CondVII_IC}
	\end{align}
\end{itemize}
where $h_0$ is the constant free surface height. These describe anticyclonic vortices for the initial vorticity is proportional to the negative constant Coriolis parameter. The behaviour of the initial conditions \eqref{eq:CondIV_IC} - \eqref{eq:CondVII_IC} affect the decomposition of the decomposed functions of $u$, $v$, and $h$ as presented in the following lemmas. 

\begin{lemma}
	Let $\{u_n(x,y,t)\}$, $\{v_n(x,y,t)\}$, $\{h_n(x,y,t)\}$ be the sequence of decomposed functions of $u$, $v$, and $h$, where their relationship is defined by \eqref{eq:ADM_iter_Linv} (for $n \in \mathbb{N}$) given a flat bottom topography $D=0$. If the initial conditions $u_0(x,y)$, $v_0(x,y)$, $h_0(x,y)$ are defined such that 
\begin{subequations}
    \begin{align}
        u_0(x,y)=&\bar{f}y,
\label{PartialDerivIC_x_Cond_tan_u_fy}\\
  \frac{\upartial^2 v_0(x,y)}{\upartial x^2}=&\frac{\upartial h_0(x,y)}{\upartial x}=0,	\label{PartialDerivIC_x_Cond_tan_u}\\
  	\frac{\upartial^2 v_0(x,y)}{\upartial y^2}=&\frac{\upartial h_0(x,y)}{\upartial y}=0,
\label{PartialDerivIC_y_Cond_tan_u}\\
  \frac{\upartial^2 v_0(x,y)}{\upartial x\upartial y}=&0.
\label{PartialDerivIC_mixedxy_Cond_tan_u}
    \end{align}
\end{subequations}
Then the higher order components $u_n(x,y,t)$, $v_n(x,y,t)$, $h_n(x,y,t)$, for $n \in \mathbb{N}^{+}$ satisfy
\begin{subequations}
    \begin{align}
        u_n(x,y,t)=&0,
		\label{PartialDeriv_x_Cond_tan_u_fy}\\
  	\frac{\upartial^2 v_n(x,y\rev{,t})}{\upartial x^2}=&\frac{\upartial h_n(x,y\rev{,t})}{\upartial x}=0,
		\label{PartialDeriv_x_Cond_tan_u}\\
  	\frac{\upartial^2 v_n(x,y\rev{,t})}{\upartial y^2}=& \frac{\upartial h_n(x,y\rev{,t})}{\upartial y}=0,
		\label{PartialDeriv_y_Cond_tan_u}\\
  \frac{\upartial^2 v_n(x,y\rev{,t})}{\upartial x\upartial y}=&0.
		\label{PartialDeriv_mixedxy_Cond_tan_u}
    \end{align}
\end{subequations}
 \label{ADM:PartialDerivativeCoeffLemma_tan_u}
\end{lemma}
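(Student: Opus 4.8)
The plan is to argue by induction on $n$, in the spirit of Lemmas~\ref{ADM:PartialDerivativeCoeffLemma} and \ref{ADM:PartialDerivativeCoeffLemma_sin}, but tracking a slightly stronger invariant than the stated conclusion. I would establish, for every $n\in\mathbb{N}$, the joint statement: (a) $u_0=\bar{f}y$ and $u_k\equiv 0$ for $1\le k\le n$; (b) each $v_k$ with $0\le k\le n$ is affine in $(x,y)$, i.e.\ $\partial_x^2 v_k=\partial_y^2 v_k=\partial_x\partial_y v_k=0$; and (c) each $h_k$ with $0\le k\le n$ is spatially constant, i.e.\ $\partial_x h_k=\partial_y h_k=0$. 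Hypotheses \eqref{PartialDerivIC_x_Cond_tan_u_fy}--\eqref{PartialDerivIC_mixedxy_Cond_tan_u} are exactly the $n=0$ instance of (a)--(c), so the base case is the hypothesis itself; all the work is in the step $n\to n+1$, which rests on one algebraic cancellation.

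For the $u$-component I would substitute the invariant into the first line of \eqref{eq:ADM_iter_Linv} with $D=0$. Since $\partial_x u_0=0$ and $u_k\equiv 0$ for $k\ge 1$, every term of $A_n(u,\partial_x u)=\sum_{j=0}^n u_j\,\partial_x u_{n-j}$ vanishes. In $A_n(v,\partial_y u)=\sum_{j=0}^n v_j\,\partial_y u_{n-j}$ the factor $\partial_y u_{n-j}$ is nonzero only when $n-j=0$, where $\partial_y u_0=\bar{f}$, so $A_n(v,\partial_y u)=\bar{f}v_n$. Because $\partial_x h_n=0$ by (c), the integrand of the $u_{n+1}$ recurrence collapses to $-\bar{f}v_n+\bar{f}v_n=0$, hence $u_{n+1}\equiv 0$. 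This cancellation of the advective stretching term against the Coriolis term is the heart of the lemma: it is special to the anticyclonic profile $u_0=\bar{f}y$ (whose relative vorticity is $-\bar{f}$), and the remaining conclusions follow from it almost mechanically.

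Given $u_n\equiv 0$ for $n\ge 1$ (and $u_0=\bar f y$), the other two components are routine. In the $v_{n+1}$ recurrence, $A_n(u,\partial_x v)=u_0\,\partial_x v_n=\bar{f}y\,\partial_x v_n$ is affine because $\partial_x v_n$ is a constant by (b); $A_n(v,\partial_y v)=\sum_j v_j\,\partial_y v_{n-j}$ is a sum of (affine)$\times$(constant) terms, hence affine; $\partial_y h_n=0$ by (c); and $\bar{f}u_n=0$. The integrand is therefore affine in $(x,y)$, and since $L_t^{-1}$ integrates only in $\tau$ it cannot raise the polynomial degree in $(x,y)$, so $v_{n+1}$ is affine, which is the $v$-part of \eqref{PartialDeriv_x_Cond_tan_u}--\eqref{PartialDeriv_mixedxy_Cond_tan_u}. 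For $h_{n+1}$, $A_n(u,h)=u_0 h_n=\bar{f}y\,h_n$ is independent of $x$ (as $h_n$ is spatially constant), so $\partial_x[A_n(u,h)]=0$; and $A_n(v,h)=\sum_j v_j h_{n-j}$ is affine, so $\partial_y[A_n(v,h)]$ is spatially constant. Thus the integrand of the $h_{n+1}$ recurrence is spatially constant, and so is $h_{n+1}$, giving the $h$-parts of \eqref{PartialDeriv_x_Cond_tan_u}--\eqref{PartialDeriv_y_Cond_tan_u}. Closing the induction yields \eqref{PartialDeriv_x_Cond_tan_u_fy}--\eqref{PartialDeriv_mixedxy_Cond_tan_u} for all $n\in\mathbb{N}^{+}$.

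I expect the only genuine difficulty to be the bookkeeping of the Adomian sums: one must see clearly that the sole surviving contribution to $A_n(v,\partial_y u)$ is $\bar{f}v_n$, so that the $u$-equation annihilates order by order, and one must verify that, once $u_n\equiv 0$ is in force, every product entering the $v$- and $h$-recurrences is of the form (affine)$\times$(constant), so that no quadratic monomials in $(x,y)$ are ever generated. There is no analytic subtlety, since $L_t^{-1}$ acts only in time and the spatial dependence remains polynomial of the prescribed degree at each step.
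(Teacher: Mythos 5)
Your proof is correct and takes essentially the same route as the paper: induction on the ADM recursion \eqref{eq:ADM_iter_Linv}, with the decisive cancellation $A_n\left(v,\upartial u/\upartial y\right)=\bar{f}v_n$ against the Coriolis term $\bar{f}v_n$ forcing $u_{n+1}\equiv 0$, after which the $v$- and $h$-recursions only ever generate affine and spatially constant integrands (the paper phrases this by differentiating the recursion directly, but the content is identical). One trivial slip: in your $v_{n+1}$ step the term $\bar{f}u_n$ vanishes only for $n\ge 1$, while for $n=0$ it equals $\bar{f}^2 y$ --- still affine, so nothing breaks.
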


\begin{proof}
	This is proven via mathematical induction by examining the recursion relationships for $u$, $v$, and $h$ in equation \eqref{eq:ADM_iter_Linv}. Condition \eqref{PartialDeriv_x_Cond_tan_u_fy} is demonstrated by examining
	\begin{equation}
		u_{n+1}=-{L}_t^{-1}\left\{\left[A_n\left(u,\frac{\upartial u}{\upartial x}\right) + A_n\left(v,\frac{\upartial u}{\upartial y}\right) \right] + \frac{1}{F^2}\frac{\upartial h_{n}}{\upartial x} - \bar{f}\, v_n\right\},
		\label{un_xx_tan_u}
	\end{equation}
In the case of $n=0$ and using \eqref{PartialDerivIC_x_Cond_tan_u_fy} - \eqref{PartialDerivIC_mixedxy_Cond_tan_u}, it reduces to 
	\begin{align}
		u_{1}&=-{L}_t^{-1}\left\{\left[A_0\left(u,\frac{\upartial u}{\upartial x}\right) + A_0\left(v,\frac{\upartial u}{\upartial y}\right) \right] + \frac{1}{F^2}\frac{\upartial h_{0}}{\upartial x} - \bar{f}\, v_0\right\}\nonumber\\
		&=-{L}_t^{-1}\left\{ A_0\left(v,\frac{\upartial u}{\upartial y}\right) - \bar{f}\, v_0\right\}\nonumber\\
		&=-{L}_t^{-1}\left\{v_0 \frac{\upartial u_0}{\upartial y} - \bar{f}\,v_0\right\}\nonumber \\
    & =0,\nonumber
	\end{align}
and continuing this argument for $n = \{1, 2, \ldots, n - 1\}$ yields equation  \eqref{PartialDeriv_x_Cond_tan_u_fy}. Condition \eqref{PartialDeriv_x_Cond_tan_u} is demonstrated by examining the following relationships
\begin{subequations}
    \begin{align}
    \frac{\upartial^2 v_{n+1}}{\upartial x^2} =&-{L}_t^{-1}\left\{\frac{\upartial^2}{\upartial x^2} \left[A_n\left(u,\frac{\upartial v}{\upartial x}\right) + A_n\left(v,\frac{\upartial v}{\upartial y}\right) \right] + \frac{1}{F^2}\frac{\upartial^{3} h_{n}}{\upartial x^2 \upartial y} + \bar{f} \, \frac{\upartial^2 u_n}{\upartial x^2 } \right\}, \hskip 10mm
		\label{vn_xx_tan_u}\\
        \frac{\upartial h_{n+1}}{\upartial x} =& - {L}_t^{-1}\left\{\frac{\upartial^{2} }{\upartial x^2}[A_n(u,h)] + \frac{\upartial^{2} }{\upartial x \upartial y}[A_n(v,h)]\right\}.
		\label{hn_xxx_tan_u}
    \end{align}
\end{subequations}
Therefore, when $n = 0$, equations (\ref{vn_xx_tan_u},b)
representing the relationship between the initial and first components for $v$ and $h$ become
 \begin{subequations}
     \begin{align}
         \frac{\upartial^2 v_{1}}{\upartial x^2} =& -{L}_t^{-1}\left\{\frac{\upartial^2}{\upartial x^2} \left[A_0\left(u,\frac{\upartial v}{\upartial x}\right) + A_0\left(v,\frac{\upartial v}{\upartial y}\right) \right] + \frac{1}{F^2}\frac{\upartial^{3} h_0}{\upartial x^2 \upartial y} + \bar{f} \, \frac{\upartial^2 u_0}{\upartial x^2} \right\},\hskip 5mm
		\label{v1_xx_tan_u}\\
  	\frac{\upartial h_{1}}{\upartial x} =& - {L}_t^{-1}\left\{\frac{\upartial^{2} }{\upartial x^2}[A_0(u,h)] + \frac{\upartial^{2} }{\upartial x \upartial y}[A_0(v,h)]\right\}.
		\label{h1_xxx_tan_u}
     \end{align}
 \end{subequations}
 Employing (\ref{PartialDerivIC_x_Cond_tan_u_fy}-d),
 it can be shown that equations
(\ref{v1_xx_tan_u},b)
 reduce to the following relationship
\begin{equation}
	\frac{\upartial^2 v_1(x,y,t)}{\upartial x^2} = \frac{\upartial h_1(x,y,t)}{\upartial x} = 0,\nonumber
\end{equation}
and continuing this argument for $n = \{1, 2, \ldots, n - 1\}$ yields equation  \eqref{PartialDeriv_x_Cond_tan_u}. Following similar arguments yields (\ref{PartialDeriv_y_Cond_tan_u},d).
\end{proof}

\begin{lemma}
	Let $\{u_n(x,y,t)\}$, $\{v_n(x,y,t)\}$, $\{h_n(x,y,t)\}$ be the sequence of decomposed functions of $u$, $v$, and $h$, where their relationship is defined by \eqref{eq:ADM_iter_Linv} (for $n \in \mathbb{N}$) given a flat bottom topography $D=0$. If the initial conditions $u_0(x,y)$, $v_0(x,y)$, $h_0(x,y)$ are defined such that 
\begin{subequations}
    \begin{align}
        v_0(x,y,t)=&-\bar{f}x ,
	\label{PartialDerivIC_x_tan_v_fx}
	\\
	\frac{\upartial^2 u_0(x,y)}{\upartial x^2}=&\frac{\upartial h_0(x,y)}{\upartial x}=0 ,
\label{PartialDerivIC_x_Cond_tan_v}\\
	\frac{\upartial^2 u_0(x,y)}{\upartial y^2}=&\frac{\upartial h_0(x,y)}{\upartial y}=0 ,
\label{PartialDerivIC_y_Cond_tan_v}\\
		\frac{\upartial^2 u_0(x,y)}{\upartial x\upartial y}=&0.
\label{PartialDerivIC_mixedxy_Cond_tan_v}
    \end{align}
\end{subequations}
 Then the higher order components $u_n(x,y,t)$, $v_n(x,y,t)$, $h_n(x,y,t)$, for $n  \in \mathbb{N}^{+}$ satisfy the property
	\begin{subequations}
	    \begin{align}
	    v_n(x,y,t)=&0,
		\label{PartialDeriv_x_Cond_tan_v_fx}\\  
	\frac{\upartial^2 u_n(x,y\rev{,t})}{\upartial x^2}=&\frac{\upartial h_n(x,y\rev{,t})}{\upartial x}=0,	\label{PartialDeriv_x_Cond_tan_v}\\
\frac{\upartial^2 u_n(x,y\rev{,t})}{\upartial y^2}=&\frac{\upartial h_n(x,y\rev{,t})}{\upartial y}=0,
\label{PartialDeriv_y_Cond_tan_v}\\
\frac{\upartial^2 u_n(x,y\rev{,t})}{\upartial x \upartial y}=&0.
\label{PartialDeriv_mixedxy_Cond_tan_v}
	\end{align}
	\end{subequations}
	
	\label{ADM:PartialDerivativeCoeffLemma_tan_v}
\end{lemma}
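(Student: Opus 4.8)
The plan is to proceed by mathematical induction on $n$, mirroring almost verbatim the proof of Lemma~\ref{ADM:PartialDerivativeCoeffLemma_tan_u} with the roles of $u$ and $v$ (and of $x$ and $y$) interchanged. In fact one can shortcut the entire argument: the system \eqref{eq:SWE_non_dimen} with $D=0$, and hence the ADM recursion \eqref{eq:ADM_iter_Linv}, is invariant under the discrete map $(x,y,u,v,\bar f)\mapsto(y,x,v,u,-\bar f)$, which carries the hypotheses \eqref{PartialDerivIC_x_tan_v_fx}--\eqref{PartialDerivIC_mixedxy_Cond_tan_v} exactly onto the hypotheses \eqref{PartialDerivIC_x_Cond_tan_u_fy}--\eqref{PartialDerivIC_mixedxy_Cond_tan_u} of Lemma~\ref{ADM:PartialDerivativeCoeffLemma_tan_u} (note that $u_0=\bar f y$ becomes $v_0=-\bar f x$), so that Lemma~\ref{ADM:PartialDerivativeCoeffLemma_tan_v} is precisely the mirror image of Lemma~\ref{ADM:PartialDerivativeCoeffLemma_tan_u}. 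For completeness, and to match the surrounding exposition, I would nonetheless write out the direct induction.

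The induction is carried on the \emph{joint} hypothesis at level $n$: $v_0=-\bar f x$ and $v_k\equiv 0$ for $1\le k\le n$; each $u_k$ with $0\le k\le n$ is affine in $(x,y)$, i.e. $\partial_x^2 u_k=\partial_y^2 u_k=\partial_x\partial_y u_k=0$, so $u_k=\alpha_k(t)x+\beta_k(t)y+\gamma_k(t)$; and each $h_k$ with $0\le k\le n$ is independent of $(x,y)$, i.e. $\partial_x h_k=\partial_y h_k=0$. The case $n=0$ is exactly the assumption \eqref{PartialDerivIC_x_tan_v_fx}--\eqref{PartialDerivIC_mixedxy_Cond_tan_v}. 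I would then substitute this hypothesis into the three recursions in \eqref{eq:ADM_iter_Linv} with $D=0$ and track which terms of the Adomian sums survive. For $v_{n+1}$: in $A_n(u,\partial_x v)=\sum_j u_j\,\partial_x v_{n-j}$ only the $j=n$ term survives, since $\partial_x v_0=-\bar f$ while $\partial_x v_m=0$ for $1\le m\le n$, and it equals $-\bar f u_n$; likewise $A_n(v,\partial_y v)=0$ because $\partial_y v_0=0$ and $v_m\equiv 0$ for $m\ge 1$, and $\partial_y h_n=0$. Hence the integrand for $v_{n+1}$ is $-(-\bar f u_n)-0-0-\bar f u_n=0$, so $v_{n+1}\equiv 0$; this cancellation between the advective contribution $u_n\partial_x v_0$ and the Coriolis term $-\bar f u_n$ is the heart of the argument. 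For $u_{n+1}$: each $\partial_x u_{n-j}=\alpha_{n-j}(t)$ and $\partial_y u_{n-j}=\beta_{n-j}(t)$ is $(x,y)$-independent, so $A_n(u,\partial_x u)$ is affine, and $A_n(v,\partial_y u)=v_0\,\partial_y u_n=-\bar f\beta_n(t)\,x$ is affine too; together with $\partial_x h_n=0$ and $\bar f v_n$ (equal to $-\bar f^2 x$ when $n=0$ and to $0$ otherwise), the integrand for $u_{n+1}$ is affine in $(x,y)$, and since $L_t^{-1}=\int_0^t(\cdot)\,{\rm d}\tau$ does not touch the $(x,y)$-dependence, $u_{n+1}$ is affine. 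For $h_{n+1}$: $A_n(u,h)=\sum_j u_j h_{n-j}$ is affine with $\partial_x[A_n(u,h)]=\sum_j\alpha_j(t)h_{n-j}(t)$ a function of $t$ only, while $A_n(v,h)=v_0 h_n=-\bar f x\,h_n(t)$ has $\partial_y[A_n(v,h)]=0$, so the integrand for $h_{n+1}$ depends on $t$ alone and hence so does $h_{n+1}$. This closes the induction and yields \eqref{PartialDeriv_x_Cond_tan_v_fx}--\eqref{PartialDeriv_mixedxy_Cond_tan_v}.

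The only genuine difficulty is bookkeeping: all four conclusions must be propagated through the induction simultaneously, because $v_{n+1}\equiv 0$ relies on $u_n$ being affine, the affineness of $u_{n+1}$ relies on $v_n$ vanishing (for $n\ge 1$) and on all $u_k$ being affine, and the $(x,y)$-independence of $h_{n+1}$ relies on both. Everything else reduces to the elementary observations that a product of an affine function of $(x,y)$ with an $(x,y)$-independent function is again affine, that $\partial_x$ or $\partial_y$ of an affine function is $(x,y)$-independent, and that $L_t^{-1}$ commutes with all of these structural properties — exactly as in the proofs of Lemma~\ref{ADM:PartialDerivativeCoeffLemma_tan_u} and, earlier, Lemma~\ref{ADM:PartialDerivativeCoeffLemma_sin}.
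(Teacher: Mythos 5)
Your proposal is correct and follows essentially the same route as the paper: an induction on the ADM recursion \eqref{eq:ADM_iter_Linv} in which the advective contribution $A_n(u,\upartial v/\upartial x)=-\bar{f}u_n$ cancels the Coriolis term $-\bar{f}u_n$ to give $v_{n+1}\equiv 0$, while affineness of $u_{n+1}$ and spatial constancy of $h_{n+1}$ propagate through the integrand. Your write-up is in fact more explicit than the paper's (which details only $n=0$ and then says ``continuing this argument''), and the $(x,y,u,v,\bar f)\mapsto(y,x,v,u,-\bar f)$ symmetry reduction to Lemma~\ref{ADM:PartialDerivativeCoeffLemma_tan_u} is a valid shortcut the paper does not use.
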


\begin{proof}
	This is proven via mathematical induction by examining the recursion relationships for $u$, $v$, and $h$ in equation \eqref{eq:ADM_iter_Linv}. Condition \eqref{PartialDeriv_x_Cond_tan_v_fx} is demonstrated by examining the following relationships
	\begin{equation}
		v_{n+1} =-{L}_t^{-1}\left\{ \left[A_n\left(u,\frac{\upartial v}{\upartial x}\right) + A_n\left(v,\frac{\upartial v}{\upartial y}\right) \right] + \frac{1}{F^2}\frac{\upartial h_{n}}{\upartial y} + \bar{f} \,  u_n \right\}.
		\label{un_xx_tan_v_fx}
	\end{equation}
At $n=0$, we have
	\begin{align}
		v_{1}& =-{L}_t^{-1}\left\{ \left[A_0\left(u,\frac{\upartial v}{\upartial x}\right) + A_0\left(v,\frac{\upartial v}{\upartial y}\right) \right] + \frac{1}{F^2}\frac{\upartial h_{0}}{\upartial y} + \bar{f} \,  u_0 \right\}\nonumber\\
		&=-{L}_t^{-1}\left\{ A_0\left(u,\frac{\upartial v}{\upartial x}\right) +\bar{f} \,  u_0 \right\}\nonumber\\
		&=-{L}_t^{-1}\left\{-u_0\bar{f} +\bar{f} \,  u_0 \right\}\nonumber\\
        &=0.\nonumber
	\end{align}
Employing a similar argument for $n=\{1,2,\ldots,n-1\}$, we have \eqref{PartialDeriv_x_Cond_tan_v_fx}. Equation \eqref{PartialDeriv_x_Cond_tan_v} is demonstrated by examining the following 
\begin{subequations}
    \begin{align}
        \frac{\upartial^{2} u_{n+1}}{\upartial x^{2}}=& -{L}_t^{-1}\left\{\frac{\upartial^{2}}{\upartial x^{2}}\left[A_n\left(u,\frac{\upartial u}{\upartial x}\right) + A_n\left(v,\frac{\upartial u}{\upartial y}\right) \right] + \frac{1}{F^2}\frac{\upartial^{3} h_{n}}{\upartial x^{3}} - \bar{f}\, \frac{\upartial^{2}v_n}{\upartial x^{2}}\right\},\hskip5mm
		\label{un_xx_tan_v}\\
  	\frac{\upartial h_{n+1}}{\upartial x} =& - {L}_t^{-1}\left\{\frac{\upartial^{2} }{\upartial x^{2}}[A_n(u,h)] + \frac{\upartial^{2} }{\upartial x \upartial y}[A_n(v,h)]\right\}.
		\label{hn_xxx_tan_v}
    \end{align}
\end{subequations}
Therefore, when $n = 0$, equations (\ref{un_xx_tan_v},b)
representing the relationship between the initial and first components for $u$ and $h$ become
\begin{subequations}
    \begin{align}
        \frac{\upartial^{2} u_{1}}{\upartial x^{2}}=& -{L}_t^{-1}\left\{\frac{\upartial^{2}}{\upartial x^{2}}\left[u_{0}\frac{\upartial u_{0}}{\upartial x} \,  + v_{0}\frac{\upartial u_{0}}{\upartial y} \right] + \frac{1}{F^2}\frac{\upartial^{3} h_{0}}{\upartial x^{3}} - \bar{f}\, \frac{\upartial^{2}v_{0}}{\upartial x^{2}}\right\},
		\label{u1_xx_tan_v}\\
  	\frac{\upartial h_{1}}{\upartial x} =& - {L}_t^{-1}\left\{\frac{\upartial^{2} }{\upartial x^{2}}[A_0(u,h)] + \frac{\upartial^{2} }{\upartial x \upartial y}[A_0(v,h)]\right\}.
		\label{h1_xxx_tan_v}
    \end{align}
\end{subequations}
Employing (\ref{PartialDerivIC_x_tan_v_fx}-d),
it can be shown that equations (\ref{u1_xx_tan_v},b)
reduce to the following relationship
	\begin{equation}
		\frac{\upartial^2 u_1(x,y,t)}{\upartial x^2} = \frac{\upartial h_1(x,y,t)}{\upartial x} = 0,\nonumber
		\label{PartialDerivComp1_x_Cond_tan_v}
	\end{equation}
        and continuing this argument for $n = \{1, 2, \ldots, n - 1\}$ yields equation  \eqref{PartialDeriv_x_Cond_tan_v}. Following similar arguments yields (\ref{PartialDeriv_y_Cond_tan_v},d).
\end{proof}

\noindent Therefore, the behaviour of $u$, $v$, and $h$ can be summarised in the following theorem. 

\begin{theorem}
	\label{thm:ADM_order_n_tan_u}
	Given a flat bottom topography, let $\{u_n(x,y,t)\}$, $\{v_n(x,y,t)\}$, $\{h_n(x,y,t)\}$ be the sequence of decomposed functions of $u$, $v$, and $h$, defined by \eqref{eq:ADM_iter_Linv}  (for $n \in \mathbb{N}$). If the initial conditions $u_0(x,y)$, $v_0(x,y)$, $h_0(x,y)$ are defined as {\rm{(\ref{PartialDerivIC_x_Cond_tan_u_fy}-d)}},
        then the solutions of $u$, $v$, and $h$ have the same property where 
\begin{subequations}
    \begin{align}        
		u(x,y,t)=&\bar{f}y,
\label{PartialCondition_u_tan_u_fy}\\
	\frac{\upartial^2 v(x,y,t)}{\upartial x^2}=&\frac{\upartial^2 v(x,y,t)}{\upartial y^2}=\frac{\upartial^2 v(x,y,t)}{\upartial x\upartial y}=0,
	\label{PartialCondition_v_tan_u}	\\
\frac{\upartial h(x,y,t)}{\upartial x}=&\frac{\upartial h(x,y,t)}{\upartial y}=0.\label{PartialCondition_h_tan_u}
    \end{align}
\end{subequations}
 Consequently, these solutions can be expressed as 
\begin{subequations}
    \begin{align}
 		{u}(x,y,t)=&\bar{f}y,		\label{uLinearCondition_tan_u}\\
		{v}(x,y,t)=&\tilde{v}_0(t)+\tilde{v}_x(t)x+\tilde{v}_y(t)y,\label{vLinearCondition_tan_u}	\\
		{h}(x,y,t)=&\tilde{h}_0(t), \label{hLinearCondition_tan_u}
    \end{align}
\end{subequations}
where the coefficients  $\tilde{v}_0(t)$, $\tilde{v}_x(t)$, $\tilde{v}_y(t)$, and $\tilde{h}_0(t)$ are time-dependent that also satisfy the following reduced system of equations
	\begin{subequations}
	\label{eq:ODE_tan_u}
 \begin{align}
		\frac{{\rm d}}{{\rm d}t}\tilde{v}_0(t)=&-\tilde{v}_0(t) \tilde{v}_y(t),  \\
		\frac{{\rm d}}{{\rm d}t}\tilde{v}_x(t)=&-\tilde{v}_x(t)\tilde{v}_y(t),  \\
		\frac{{\rm d}}{{\rm d}t}\tilde{v}_y(t)=&-\bar{f}\tilde{v}_x(t)-\tilde{v}_y(t)^2-\bar{f}^2 , \\
		\frac{{\rm d}}{{\rm d}t}\tilde{h}_0(t)=&-\tilde{h}_0(t)\tilde{v}_y(t).
	\end{align}
 	\end{subequations}
\end{theorem}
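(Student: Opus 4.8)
The plan is to mirror the proof of Theorem~\ref{thm:ADM_order_n_sin}, using Lemma~\ref{ADM:PartialDerivativeCoeffLemma_tan_u} in place of Lemma~\ref{ADM:PartialDerivativeCoeffLemma_sin}. First I would observe that the initial data $(\ref{PartialDerivIC_x_Cond_tan_u_fy}\text{-d})$ places us exactly in the hypotheses of Lemma~\ref{ADM:PartialDerivativeCoeffLemma_tan_u}, so that for every $n \in \mathbb{N}^{+}$ one has $u_n \equiv 0$, $\partial^2 v_n/\partial x^2 = \partial^2 v_n/\partial y^2 = \partial^2 v_n/\partial x\partial y = 0$, and $\partial h_n/\partial x = \partial h_n/\partial y = 0$. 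Summing the decompositions \eqref{eq:ADM_series} term by term then gives $u(x,y,t) = u_0(x,y) = \bar{f}y$ since all higher components vanish, while linearity of differentiation yields \eqref{PartialCondition_v_tan_u} for $v$ and, using in addition $\partial h_0/\partial x = \partial h_0/\partial y = 0$ from \eqref{PartialDerivIC_x_Cond_tan_u}--\eqref{PartialDerivIC_y_Cond_tan_u}, \eqref{PartialCondition_h_tan_u} for $h$.

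Next I would integrate these pointwise constraints exactly as in the proof of Theorem~\ref{thm:ADM_Ansatz_Connection}: $\partial^2 v/\partial x^2 = 0$ forces $v = C_1(y,t)x + C_2(y,t)$; $\partial^2 v/\partial x\partial y = 0$ forces $C_1 = \tilde{v}_x(t)$; and $\partial^2 v/\partial y^2 = 0$ forces $C_2(y,t) = \tilde{v}_y(t)y + \tilde{v}_0(t)$, which is \eqref{vLinearCondition_tan_u}. Likewise $\partial h/\partial x = \partial h/\partial y = 0$ gives $h = \tilde{h}_0(t)$, i.e. \eqref{hLinearCondition_tan_u}, and \eqref{uLinearCondition_tan_u} is immediate from the first step.

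Finally I would substitute $u = \bar{f}y$, $v = \tilde{v}_0(t) + \tilde{v}_x(t)x + \tilde{v}_y(t)y$, $h = \tilde{h}_0(t)$ with $D = 0$ into \eqref{eq:SWE_non_dimen}. The $u$-equation collapses to $0 = -\bar{f}v + \bar{f}v$, an identity; this is the single place where the particular choice $u_0 = \bar{f}y$ (anticyclonic shear exactly balancing the Coriolis term) is used, and it is the only step requiring any thought. Collecting the constant, $x$-, and $y$-coefficients in the $v$-equation produces the first three ODEs of \eqref{eq:ODE_tan_u}, and the $h$-equation yields $\dot{\tilde{h}}_0 = -\tilde{v}_y\tilde{h}_0$ once one notes that $\partial_x(uh) = 0$ because $u$ is independent of $x$ and $h$ is independent of $y$. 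The main obstacle, as in the earlier theorems of this section, is the implicit interchange of summation and differentiation when passing from the componentwise identities to the statements about $u$, $v$, and $h$; I would handle this at the partial-sum level in line with the rest of the paper, noting that each $S_N(\cdot)$ in \eqref{eq:partial_sum} already satisfies the stated derivative identities exactly.
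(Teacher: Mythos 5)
Your proposal is correct and follows essentially the same route as the paper's proof: apply Lemma~\ref{ADM:PartialDerivativeCoeffLemma_tan_u} to the series \eqref{eq:ADM_series}, integrate the resulting pointwise constraints to obtain (\ref{uLinearCondition_tan_u}-c), and substitute into \eqref{eq:SWE_non_dimen} to read off \eqref{eq:ODE_tan_u}. The only quibble is cosmetic: in justifying $\upartial_x(uh)=0$ the relevant fact is that both $u=\bar{f}y$ and $h=\tilde{h}_0(t)$ are independent of $x$ (not that $h$ is independent of $y$), but this does not affect the argument.
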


\begin{proof}
  Applying Lemma \ref{ADM:PartialDerivativeCoeffLemma_tan_u} to each component in \eqref{eq:ADM_series} yields
(\ref{PartialCondition_u_tan_u_fy}-c).
  From \eqref{PartialCondition_v_tan_u}, we observe that 
	\begin{align}
		\frac{\upartial^2 v(x,y,t)}{\upartial x^2} = 0\hskip 5mm \mbox{yields} \hskip 5mm v(x,y,t)=C_1(y,t) x+C_2(y,t),\nonumber
	\end{align}
        where the integration constants, $C_1(y,t)$ and $C_2(y,t)$, are independent of $x$. Similarly, we have 
	\begin{align}
		\frac{\upartial^2 v(x,y,t)}{\upartial x \upartial y} = 0 \hskip 5mm \mbox{yields} \hskip 5mm C_1(y,t) = \tilde{v}_x(t)\nonumber
	\end{align}
	 and 
	\begin{align}
		\frac{\upartial^2 v(x,y,t)}{\upartial y^2} = 0 \hskip 5mm \mbox{yields} \hskip 5mm C_2(y,t)=\tilde{v}_y(t) y+\tilde{v}_0(t).\nonumber
	\end{align}
        and thus \eqref{vLinearCondition_tan_u} is achieved. Similar arguments can be made to achieve
(\ref{uLinearCondition_tan_u},c),
        respectively. The reduced system of equations  \eqref{eq:ODE_tan_u} is obtained via substituting
(\ref{uLinearCondition_tan_u}-c)
        into \eqref{eq:SWE_non_dimen}.
\end{proof}

\begin{theorem}
	\label{thm:ADM_order_n_tan_v}
	Let $\{u_n(x,y,t)\}$, $\{v_n(x,y,t)\}$, $\{h_n(x,y,t)\}$ be the sequence of decomposed functions of $u$, $v$, and $h$, where their relationship is defined by \eqref{eq:ADM_iter_Linv} (for $n \in \mathbb{N}$) given a flat bottom topography $D=0$. If the initial conditions $u_0(x,y)$, $v_0(x,y)$, $h_0(x,y)$ are defined as
{\rm{(\ref{PartialDerivIC_x_tan_v_fx}-d)}},
then the solutions of $u$, $v$, and $h$ have the same property, where 
\begin{subequations}
    \begin{align}        
	\frac{\upartial^2 u(x,y,t)}{\upartial x^2} = &\frac{\upartial^2 u(x,y,t)}{\upartial y^2} 
		= \frac{\upartial^2 u(x,y,t)}{\upartial x \upartial y} = 0,
	\label{PartialCondition_u_tan_v}\\
			v(x,y,t)=& -\bar{f}x,
	\label{PartialCondition_v_tan_v}\\
		\frac{\upartial h(x,y,t)}{\upartial x}
		=& \frac{\upartial h(x,y,t)}{\upartial y} = 0.
	\label{PartialCondition_h_tan_v}
    \end{align}
\end{subequations}
 Consequently, these solutions can be expressed as 
\begin{subequations}
    \begin{align}
		{u}(x,y,t)=&\tilde{u}_0(t)+\tilde{u}_x(t)x+\tilde{u}_y(t)y,
	\label{uLinearCondition_tan_v}\\
		{v}(x,y,t)=&-\bar{f}x,
	\label{vLinearCondition_tan_v}\\
		{h}(x,y,t)=&\tilde{h}_0(t), 
	\label{hLinearCondition_tan_v}
    \end{align}
\end{subequations}
 where the coefficients $\tilde{u}_0(t)$, $\tilde{u}_x(t)$, $\tilde{u}_y(t)$, and $\tilde{h}_0(t)$,  are time-dependent. These coefficients satisfy
 \begin{subequations}
 	\label{eq:ODE_tan_v}
     \begin{align}
		\frac{\rm d}{{\rm d}t}\tilde{u}_0(t)=&-\tilde{u}_0(t) \tilde{u}_x(t), \\
		\frac{\rm d}{{\rm d}t}\tilde{u}_x(t)=&-\tilde{u}_x(t)^2+\bar{f}\tilde{u}_y(t)-\bar{f}^2,\\
		\frac{\rm d}{{\rm d}t}\tilde{u}_y(t)=&- \tilde{u}_y(t) \tilde{u}_x(t) ,\\
		\frac{\rm d}{{\rm d}t}\tilde{h}_0(t)=&-\tilde{h}_0(t)\tilde{u}_x(t).
	\end{align}
  \end{subequations}
\end{theorem}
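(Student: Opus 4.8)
The plan is to follow the template already established for Theorem~\ref{thm:ADM_order_n_tan_u}, exploiting the $u\leftrightarrow v$ symmetry between the two constructions. First I would observe that the prescribed data (\ref{PartialDerivIC_x_tan_v_fx}-d) are precisely the hypotheses of Lemma~\ref{ADM:PartialDerivativeCoeffLemma_tan_v}, so that every higher-order component satisfies $v_n\equiv 0$ for $n\in\mathbb{N}^{+}$ together with the vanishing of $\partial^2 u_n/\partial x^2$, $\partial^2 u_n/\partial y^2$, $\partial^2 u_n/\partial x\partial y$, $\partial h_n/\partial x$ and $\partial h_n/\partial y$. Summing term by term over the Adomian series \eqref{eq:ADM_series} and differentiating under the sum transports these identities to the full fields $u$, $v$, $h$, which gives (\ref{PartialCondition_u_tan_v}-c); in particular $v=v_0=-\bar f x$ exactly, since all corrections vanish by \eqref{PartialDeriv_x_Cond_tan_v_fx}.

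Next I would integrate the pointwise conditions to recover the functional form of the solution. From $\partial^2 u/\partial x^2=0$ one gets $u=C_1(y,t)\,x+C_2(y,t)$ with $C_1,C_2$ independent of $x$; then $\partial^2 u/\partial x\partial y=0$ forces $C_1=\tilde u_x(t)$, and $\partial^2 u/\partial y^2=0$ forces $C_2(y,t)=\tilde u_y(t)\,y+\tilde u_0(t)$, yielding \eqref{uLinearCondition_tan_v}. Likewise $\partial h/\partial x=\partial h/\partial y=0$ gives $h=\tilde h_0(t)$, i.e. \eqref{hLinearCondition_tan_v}, while \eqref{vLinearCondition_tan_v} has already been obtained.

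Finally I would substitute the resulting ansatz \eqref{uLinearCondition_tan_v}--\eqref{hLinearCondition_tan_v} into the governing system \eqref{eq:SWE_non_dimen} with $D=0$ and collect the $x^0$, $x^1$ and $y^1$ coefficients. The $u$-momentum equation yields the three ODEs for $\tilde u_0$, $\tilde u_x$, $\tilde u_y$ in \eqref{eq:ODE_tan_v}, and the continuity equation yields the ODE for $\tilde h_0$. The only step requiring real attention is the $v$-momentum equation: its left-hand side vanishes identically, while its right-hand side collapses to $-u\,\partial_x v-\bar f u=\bar f u-\bar f u=0$, so it imposes no additional constraint. Verifying that this cancellation is exact --- rather than a hidden algebraic relation among the $\tilde u$-coefficients --- is precisely what guarantees that $v=-\bar f x$ is self-consistent for all $t$; it is the one place where the argument could in principle break down, and it goes through here only because the prescribed initial vorticity is tuned to $-\bar f$ (the anticyclonic case).
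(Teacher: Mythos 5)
Your proposal is correct and follows essentially the same route as the paper: apply Lemma~\ref{ADM:PartialDerivativeCoeffLemma_tan_v} to the Adomian series to obtain (\ref{PartialCondition_u_tan_v}-c), integrate the vanishing second derivatives to get the linear/constant forms, and substitute into \eqref{eq:SWE_non_dimen} to read off \eqref{eq:ODE_tan_v}. Your additional explicit check that the $v$-momentum equation reduces to $\bar f u-\bar f u=0$ (so that $v=-\bar f x$ is self-consistent) is a worthwhile detail the paper leaves implicit, but it does not change the argument.
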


\begin{proof}
  Applying Lemma \ref{ADM:PartialDerivativeCoeffLemma_tan_v} to each component in \eqref{eq:ADM_series} yields
(\ref{PartialCondition_u_tan_v}-c).
  From \eqref{PartialCondition_u_tan_v}, we observe that 
	\begin{align}
		\frac{\upartial^2 u(x,y,t)}{\upartial x^2} = 0 \hskip 5mm \mbox{yields}\hskip 5mm  u(x,y,t)=C_1(y,t) x+C_2(y,t),\nonumber
	\end{align}
where the integration constants, $C_1(y,t)$ and $C_2(y,t)$, are independent of $x$. Similarly, we have 
	\begin{align}
		\frac{\upartial^2 u(x,y,t)}{\upartial x \upartial y} = 0 \hskip 5mm \mbox{yields} \hskip 5mm C_1(y,t) = \tilde{u}_x(t)\nonumber
	\end{align}
and 
	\begin{align}
		\frac{\upartial^2 u(x,y,t)}{\upartial y^2} = 0 \hskip 5mm \mbox{yields} \hskip 5mm  C_2(y,t)=\tilde{u}_y(t) y+\tilde{u}_0(t).\nonumber
	\end{align}
        and thus \eqref{uLinearCondition_tan_v} is achieved. Similar arguments can be made to achieve
(\ref{vLinearCondition_tan_v},c).
        The reduced equations \eqref{eq:ODE_tan_v} is obtained by substituting
 (\ref{uLinearCondition_tan_v}-c)
        into \eqref{eq:SWE_non_dimen}. 
\end{proof}

\noindent Therefore, the following results describe closed-form solutions for anticyclonic vortices with finite escape times. 

\begin{theorem}
\noindent For any flows over flat bottom topographies ($D=0$) with a constant Coriolis parameter ($\bar{f} \ne 0$) and initial constant free surface height ($h_0$), the solutions $u$, $v$, and $h$ with respect to their corresponding initial conditions are defined as follows.
		\begin{enumerate}[(i)]
		\item If the initial behaviour is defined by \eqref{eq:CondIV_IC} then 
  \begin{subequations}
		\label{eq:CondIV_exact}
  \begin{align}
			u(x,y,t)=&\bar{f}y,\\
			v(x,y,t)=&-\bar{f}y\,\tan(\bar{f}t),\\
			h(x,y,t)=& h_0\,\sec(\bar{f}t).
		\end{align}
  \end{subequations}
\item If the initial behaviour is defined by \eqref{eq:CondV_IC} then \begin{subequations}
  		\label{eq:CondV_exact}
	\begin{align}
			u(x,y,t)=&\bar{f}\,y,\\
			v(x,y,t)=&\bar{f}y\sec(\bar{f}t)-\bar{f}y\tan(\bar{f}t)+x\left[-\frac{\rm d}{{\rm d}t}\tan(\bar{f}t)+\frac{\rm d}{{\rm d}t}\sec(\bar{f}t)\right], \\
			h(x,y,t)=&\frac{h_0}{\bar{f}}\left[\frac{\rm d}{{\rm d}t}\tan(\bar{f}t)-\frac{\rm d}{{\rm d}t}\sec(\bar{f}t)\right]. 
		\end{align}
  \end{subequations}
 \end{enumerate}
Furthermore, these solutions describe anticyclonic vortices with finite escape times that are based on the initial zonal velocity being represented as $u(x, y, 0) = u_0(x,y) = \bar{f} y$. 
\label{Thm:AnticyclonicVorA}
\end{theorem}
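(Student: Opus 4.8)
The plan is to reduce Theorem \ref{Thm:AnticyclonicVorA} to solving the reduced ODE systems already established in Theorems \ref{thm:ADM_order_n_tan_u} and \ref{thm:ADM_order_n_tan_v}. For case (i), I would first verify that the initial data \eqref{eq:CondIV_IC}, namely $u_0 = \bar f y$, $v_0 = 0$, $h_0(x,y) = h_0$, satisfies the hypotheses \rm{(\ref{PartialDerivIC_x_Cond_tan_u_fy}-d)} of Lemma \ref{ADM:PartialDerivativeCoeffLemma_tan_u}: indeed $u_0 = \bar f y$ exactly, and $v_0 \equiv 0$, $h_0$ constant trivially kill all the required derivatives. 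Hence Theorem \ref{thm:ADM_order_n_tan_u} applies, giving $u \equiv \bar f y$ and the representations \eqref{uLinearCondition_tan_u}--\eqref{hLinearCondition_tan_u} with the coefficient ODEs \eqref{eq:ODE_tan_u}. The initial data forces $\tilde v_0(0) = \tilde v_x(0) = \tilde v_y(0) = 0$ and $\tilde h_0(0) = h_0$. Then the $\tilde v_0$, $\tilde v_x$ equations are linear homogeneous with zero data, so $\tilde v_0 \equiv \tilde v_x \equiv 0$; the $\tilde v_y$ equation collapses to $\dot{\tilde v}_y = -\tilde v_y^2 - \bar f^2$ with $\tilde v_y(0)=0$, whose solution is $\tilde v_y(t) = -\bar f \tan(\bar f t)$; finally $\dot{\tilde h}_0 = -\tilde h_0 \tilde v_y = \bar f \tan(\bar f t)\,\tilde h_0$ integrates to $\tilde h_0(t) = h_0 \sec(\bar f t)$. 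Assembling gives \eqref{eq:CondIV_exact}.

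For case (ii) with initial data \eqref{eq:CondV_IC}, I would similarly check the Lemma \ref{ADM:PartialDerivativeCoeffLemma_tan_u} hypotheses — here $u_0 = \bar f y$ again, while $v_0 = -\bar f x + \bar f y$ is affine (so its second derivatives all vanish) and $h_0$ is constant — so Theorem \ref{thm:ADM_order_n_tan_u} again applies with $u \equiv \bar f y$ and ODE system \eqref{eq:ODE_tan_u}. Now the initial conditions read $\tilde v_0(0) = 0$, $\tilde v_x(0) = -\bar f$, $\tilde v_y(0) = \bar f$, $\tilde h_0(0) = h_0$. The key observation is the substitution $w(t) := \tilde v_y(t) + \bar f \tilde v_x(t)/\bar f$... more cleanly, one checks that the combination appearing in the stated answer, $-\tfrac{\rm d}{{\rm d}t}\tan(\bar f t) + \tfrac{\rm d}{{\rm d}t}\sec(\bar f t)$, solves the $\tilde v_x$ equation: since $\dot{\tilde v}_x = -\tilde v_x \tilde v_y$ and $\dot{\tilde v}_y = -\bar f \tilde v_x - \tilde v_y^2 - \bar f^2$, the sum $s := \tilde v_x + \tilde v_y$ (or an appropriate variant) satisfies a Riccati-type equation that linearizes. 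I would solve the $(\tilde v_x,\tilde v_y)$ subsystem explicitly, read off $\tilde v_x(t) = -\tfrac{\rm d}{{\rm d}t}\tan(\bar f t) + \tfrac{\rm d}{{\rm d}t}\sec(\bar f t)$ and the corresponding $\tilde v_y$, then integrate the linear equations for $\tilde v_0$ and $\tilde h_0$ (both of the form $\dot y = -y\,\tilde v_y$, matching the structure that produced $\sec$ in case (i)); the stated $h(x,y,t) = \tfrac{h_0}{\bar f}[\tfrac{\rm d}{{\rm d}t}\tan(\bar f t) - \tfrac{\rm d}{{\rm d}t}\sec(\bar f t)]$ is then obtained, and plugging $x,y$-dependence back in through \eqref{vLinearCondition_tan_u} yields the quoted $v(x,y,t)$.

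Finally I would note the finite escape time: $\tan(\bar f t)$ and $\sec(\bar f t)$ both blow up as $\bar f t \to \pi/2$, so all solution components become singular at $t = \pi/(2\bar f)$, which is the finite escape time; and the identification $u(x,y,0) = \bar f y$ with negative relative vorticity $\partial_x v_0 - \partial_y u_0 = -\bar f < 0$ (strictly, for case (i), $\partial_x v_0 - \partial_y u_0 = -\bar f$) justifies the "anticyclonic" descriptor. The main obstacle is the middle step of case (ii): integrating the coupled nonlinear pair $(\tilde v_x, \tilde v_y)$ in \eqref{eq:ODE_tan_u} with the given non-trivial initial data. Unlike case (i), neither component decouples immediately, so one must find the right linearizing change of variables (a Riccati substitution, writing $\tilde v_x$ or a linear combination as a logarithmic derivative) to extract the closed form; once that is done, the remaining integrations are the same elementary linear-ODE computations as in case (i). An entirely parallel argument via Theorem \ref{thm:ADM_order_n_tan_v} would handle the symmetric Conditions VI and VII, though only (i) and (ii) are stated here.
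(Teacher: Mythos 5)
Your proposal is correct and follows essentially the same route as the paper: verify that Conditions IV and V satisfy the hypotheses of Lemma \ref{ADM:PartialDerivativeCoeffLemma_tan_u} so that Theorem \ref{thm:ADM_order_n_tan_u} applies, translate the initial data into $\tilde{v}_0(0)=\tilde{v}_x(0)=\tilde{v}_y(0)=0$ (resp.\ $\tilde{v}_0(0)=0$, $\tilde{v}_x(0)=-\bar{f}$, $\tilde{v}_y(0)=\bar{f}$) with $\tilde{h}_0(0)=h_0$, and integrate the reduced system \eqref{eq:ODE_tan_u}. In fact you supply more detail than the paper's proof (which simply states that solving the ODE system yields the result), and your explicit Riccati computation for case (i) and the sketched linearisation for case (ii) are consistent with the stated closed forms.
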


\begin{proof}
\noindent Equations \eqref{eq:CondIV_IC} and \eqref{eq:CondV_IC} satisfy Theorem \ref{thm:ADM_order_n_tan_u}, where these flows can be represented by  \eqref{eq:ODE_tan_u}. The initial conditions \eqref{eq:CondIV_IC} require 
\begin{subequations}
\label{ReducedICs_CondIV_all}
    \begin{align}
\tilde{v}_0(t=0)=&\tilde{v}_x(t=0)=\tilde{v}_y(t=0)=0,
\label{ReducedICs_CondIVa}\\
\tilde{h}_0(t=0)=&h_0. 
\label{ReducedICs_CondIVb}
    \end{align}
\end{subequations}
 Similarly, the initial conditions \eqref{eq:CondV_IC} require 
\begin{subequations}
\label{ReducedICs_CondV_all}
    \begin{align}
        \tilde{v}_0(t=0)=&0, \hskip 5mm  \tilde{v}_x(t=0)=-\bar{f}, \hskip 5mm   \tilde{v}_y(t=0)=\bar{f},
\label{ReducedICs_CondVa}\\
\tilde{h}_0(t=0)=&h_0. 
\label{ReducedICs_CondVb}
    \end{align}
\end{subequations}
Solving \eqref{eq:ODE_tan_u} with the initial conditions, defined by
\eqref{ReducedICs_CondIV_all} and \eqref{ReducedICs_CondV_all}, achieves \eqref{eq:CondIV_exact} and \eqref{eq:CondV_exact}, respectively. 
\end{proof}

\begin{theorem}
\noindent For any flows over flat bottom topographies ($D=0$) with a constant Coriolis parameter ($\bar{f} \ne 0$) and initial constant free surface height ($h_0 \ne 0$), the solutions $u$, $v$, and $h$ with respect to their corresponding initial conditions are defined as follows.

\begin{enumerate}[(i)]
	\item If the initial behaviour is defined by \eqref{eq:CondVI_IC} then 
	\begin{subequations}
	\label{eq:CondVI_exact}
   \begin{align}
		u(x,y,t)=&-\bar{f}x\,\tan(\bar{f}t),\\	v(x,y,t)=&-\bar{f}x,\\ h(x,y,t)=&h_0\,\sec(\bar{f}t).
	\end{align}  
	\end{subequations}
	\item If the initial behaviour is defined by \eqref{eq:CondVII_IC} then 
	\begin{subequations}
	\label{eq:CondVII_exact}	    
	\begin{align}
	u(x,y,t)=&\bar{f}x\sec(\bar{f}t)-\bar{f}x\tan(\bar{f}t)+y\left[\frac{{\rm d}}{{\rm d}t}\tan(\bar{f}t)-\frac{{\rm d}}{{\rm d}t}\sec(\bar{f}t)\right],\\
		v(x,y,t)=&-\bar{f} x,\\  h(x,y,t)=&\frac{h_0}{\bar{f}}\left[\frac{{\rm d}}{{\rm d}t}\tan(\bar{f}t)-\frac{{\rm d}}{{\rm d}t}\sec(\bar{f}t)\right].
	\end{align}
 	\end{subequations}
\end{enumerate}
Furthermore, these solutions describe anticyclonic vortices with finite escape times that are based on the initial meridional velocity being represented as $v(x,y, 0)  = v_0(x,y) = -\bar{f} x$. 
\label{Thm:AnticyclonicVorB}
\end{theorem}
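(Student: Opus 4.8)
The strategy parallels that of Theorem~\ref{Thm:AnticyclonicVorA}, now built on Theorem~\ref{thm:ADM_order_n_tan_v} in place of Theorem~\ref{thm:ADM_order_n_tan_u}. First I would check that both \eqref{eq:CondVI_IC} and \eqref{eq:CondVII_IC} satisfy the hypotheses (\ref{PartialDerivIC_x_tan_v_fx}-d) of Lemma~\ref{ADM:PartialDerivativeCoeffLemma_tan_v}: in each case $v_0(x,y)=-\bar{f}x$, while $u_0$ is either $0$ or the affine function $\bar{f}x+\bar{f}y$ (so all its second-order partials vanish) and $h_0(x,y)\equiv h_0$ (so $\partial h_0/\partial x=\partial h_0/\partial y=0$). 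Hence Theorem~\ref{thm:ADM_order_n_tan_v} applies and the solution is forced to take the form $u=\tilde{u}_0(t)+\tilde{u}_x(t)x+\tilde{u}_y(t)y$, $v=-\bar{f}x$, $h=\tilde{h}_0(t)$, with $(\tilde{u}_0,\tilde{u}_x,\tilde{u}_y,\tilde{h}_0)$ governed by the reduced system \eqref{eq:ODE_tan_v}. Reading off the prescribed data gives $\tilde{u}_0(0)=\tilde{u}_x(0)=\tilde{u}_y(0)=0$ and $\tilde{h}_0(0)=h_0$ for Condition~VI, and $\tilde{u}_0(0)=0$, $\tilde{u}_x(0)=\tilde{u}_y(0)=\bar{f}$ and $\tilde{h}_0(0)=h_0$ for Condition~VII; in both cases $\tilde{u}_0$ solves $\dot{\tilde{u}}_0=-\tilde{u}_0\tilde{u}_x$ with vanishing initial value, so $\tilde{u}_0\equiv0$ by uniqueness. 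What remains is to integrate \eqref{eq:ODE_tan_v} in each case and substitute back.

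For Condition~VI the system decouples: $\tilde{u}_y(0)=0$ forces $\tilde{u}_y\equiv0$, leaving the autonomous Riccati equation $\dot{\tilde{u}}_x=-\tilde{u}_x^2-\bar{f}^2$ with $\tilde{u}_x(0)=0$, whose separated solution is $\tilde{u}_x(t)=-\bar{f}\tan(\bar{f}t)$; then $\dot{\tilde{h}}_0=-\tilde{h}_0\tilde{u}_x$ integrates, via the integrating factor, to $\tilde{h}_0(t)=h_0\sec(\bar{f}t)$, giving \eqref{eq:CondVI_exact}. The step I expect to be the main obstacle is Condition~VII, where the Riccati equation for $\tilde{u}_x$ carries the time-dependent and coupled forcing $\bar{f}\tilde{u}_y$. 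The clean route, which also subsumes Condition~VI, is a change of variables. Since $\tilde{u}_y$ and $\tilde{h}_0$ obey the same linear equation $\dot q=-q\,\tilde{u}_x$, their ratio is conserved, so $\tilde{u}_y(t)=(\bar{f}/h_0)\,\tilde{h}_0(t)$; then setting $w=1/\tilde{h}_0$ gives $\tilde{u}_x=\dot w/w$ and, after one differentiation, collapses \eqref{eq:ODE_tan_v} to the linear forced oscillator $\ddot w+\bar{f}^2 w=\bar{f}^2/h_0$ (the forcing drops out in the Condition~VI case). With $w(0)=1/h_0$ and $\dot w(0)=\tilde{u}_x(0)/h_0=\bar{f}/h_0$, this yields $w(t)=(1+\sin(\bar{f}t))/h_0$, hence $\tilde{h}_0(t)=h_0/(1+\sin(\bar{f}t))$, $\tilde{u}_x(t)=\bar{f}\cos(\bar{f}t)/(1+\sin(\bar{f}t))=\bar{f}[\sec(\bar{f}t)-\tan(\bar{f}t)]$ and $\tilde{u}_y(t)=\bar{f}/(1+\sin(\bar{f}t))=\bar{f}\sec(\bar{f}t)[\sec(\bar{f}t)-\tan(\bar{f}t)]$. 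Using $\bar{f}\sec^2(\bar{f}t)-\bar{f}\sec(\bar{f}t)\tan(\bar{f}t)=\frac{{\rm d}}{{\rm d}t}\tan(\bar{f}t)-\frac{{\rm d}}{{\rm d}t}\sec(\bar{f}t)$ to recast $\tilde{u}_y$ and $\tilde{h}_0=(h_0/\bar{f})\tilde{u}_y$ into the stated form, and substituting into the representation above, reproduces \eqref{eq:CondVII_exact}.

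Finally I would record the qualitative content. The initial relative vorticity $\partial_x v_0-\partial_y u_0$ equals $-\bar{f}$ under Condition~VI and $-2\bar{f}$ under Condition~VII, so both flows are anticyclonic; and since the trigonometric coefficients $\sec(\bar{f}t),\tan(\bar{f}t)$ (equivalently $1/(1+\sin(\bar{f}t))$) diverge at a finite time, the solutions blow up in finite time, the singularity being carried by the meridional shear $v_0(x,y)=-\bar{f}x$. As an independent check, one can verify that the closed forms \eqref{eq:CondVI_exact}--\eqref{eq:CondVII_exact} are consistent with those of Theorem~\ref{Thm:AnticyclonicVorA}, which treats the companion Conditions~IV--V by the same method under the $x\leftrightarrow y$, $u\leftrightarrow v$ interchange of the flat-bottom equations~\eqref{eq:SWE_non_dimen}.
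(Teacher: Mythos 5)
Your proposal is correct and follows essentially the same route as the paper: verify that Conditions VI and VII satisfy the hypotheses of Theorem~\ref{thm:ADM_order_n_tan_v}, reduce to the ODE system \eqref{eq:ODE_tan_v} with the stated initial data, and integrate. The only difference is that you actually carry out the integration (the conserved ratio $\tilde{u}_y/\tilde{h}_0$ and the substitution $w=1/\tilde{h}_0$ reducing Condition~VII to a forced harmonic oscillator is a clean way to do it), whereas the paper simply asserts that solving the reduced system yields \eqref{eq:CondVI_exact} and \eqref{eq:CondVII_exact}.
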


\begin{proof}
\noindent Equations \eqref{eq:CondVI_IC} and \eqref{eq:CondVII_IC} satisfy Theorem \ref{thm:ADM_order_n_tan_v}, where these flows can be represented by \eqref{eq:ODE_tan_v}. The initial conditions \eqref{eq:CondVI_IC} require 
\begin{subequations}
\label{ReducedICs_CondVI_all}
    \begin{align}
\tilde{u}_0(t=0)=&\tilde{u}_x(t=0)=\tilde{u}_y(t=0)=0,\;\;
\label{ReducedICs_CondVIa}\\
\tilde{h}_0(t=0)=&h_0. 
\label{ReducedICs_CondVIb}
    \end{align}
\end{subequations}
\noindent Similarly, the initial conditions \eqref{eq:CondVII_IC} require 
\begin{subequations}
\label{ReducedICs_CondVII_all}
    \begin{align}
        \tilde{u}_0(t=0)=&0, \; \; \tilde{u}_x(t=0)=\tilde{u}_y(t=0)=\bar{f}, \;\;
\label{ReducedICs_CondVIIa}\\
\tilde{h}_0(t=0)=&h_0. 
\label{ReducedICs_CondVIIb}
    \end{align}
\end{subequations}
\noindent Solving \eqref{eq:ODE_tan_v} with the initial conditions, defined by
(\ref{ReducedICs_CondVI_all}) and (\ref{ReducedICs_CondVII_all}),
achieves \eqref{eq:CondVI_exact} and \eqref{eq:CondVII_exact}, respectively.
\end{proof}

\par Theorems \ref{Thm:AnticyclonicVorA} and \ref{Thm:AnticyclonicVorB} show that the flow velocity components directly depend only on the constant Coriolis parameter whereas the free surface height depends on both the constant Coriolis parameter and the initial free surface height. Since these solutions are valid for $t \in \left[0,  \pi/\left(2 \bar{f}\right) \right)$, these results also represent anticyclonic vortices with finite escape times that rotate faster and are more unstable than cyclonic ones which is consistent with previous observations \citep{tsang2015ellipsoidal, mckiver2020balanced}.  These solutions also consider the nonlinear balance between the inertial and Coriolis terms in the momentum portion of the shallow water equations,  which is important to understand irregularities between cyclonic and anticyclonic vortices which also improves previous results using quasi-geostrophic approximations \citep{vallis2019essentials, mckiver2020balanced},  linear stability analysis techniques \citep{clark2013improved}, and numerical approaches \citep{tsang2015ellipsoidal}.

\section{Numerical validation and results}
\label{sec:numerical}

Numerical validation is provided via examining the convergence and accuracy of the partial sums of $u$, $v$, and $h$ (given by $S_N(u)$, $S_N(v)$, and $S_N(h)$) against the governing equations \eqref{eq:SWE_non_dimen}, the exact solutions ($u$, $v$, and $h$), and numerical solutions ($\hat{u}$, $\hat{v}$, and $\hat{h}$) via the relative integral squared error defined as 
\begin{align}
	E(N) = \frac{\int_{-L_x}^{L_x}\int_{-L_y}^{L_y}\int_0^{T} e(N;x,y,t) \,{\rm d}t\,{\rm d}x\,{\rm d}y}{\int_{-L_x}^{L_x}\int_{-L_y}^{L_y}\int_0^{T}(u^2+v^2+h^2) \,{\rm d}t\,{\rm d}x\,{\rm d}y},
\label{eq:square_residue_int}
\end{align}

\noindent where $L_x=1$, $L_y=1$, and $T=1$. The convergence $E_c(N)$ is measured by evaluating \eqref{eq:square_residue_int} with 
\begin{align}
	& \hskip -10mm e(N;x,y,t) \nonumber\\
 =&\Bigg\{\frac{\upartial S_N(u)}{\upartial t}+S_N(u)\frac{\upartial S_N(u)}{\upartial x}+S_N(v)\frac{\upartial S_N(u)}{\upartial y}+\frac{1}{F^2}\frac{\upartial S_N(h)}{\upartial x}-\bar{f}S_N(v)\Bigg\}^2\nonumber\\
	&+\Bigg\{\frac{\upartial S_N(v)}{\upartial t}+S_N(u)\frac{\upartial S_N(v)}{\upartial x}+S_N(v)\frac{\upartial S_N(v)}{\upartial y}+\frac{1}{F^2}\frac{\upartial S_N(h)}{\upartial y}+\bar{f}S_N(u)\Bigg\}^2\hskip 15mm\nonumber\\
	&+\Bigg\{\frac{\upartial S_N(h)}{\upartial t}+\frac{\upartial }{\upartial x}[S_N(u)(S_N(h)+D)]+\frac{\upartial }{\upartial y}[S_N(v)(S_N(h)+D)]\Bigg\}^2.
\label{eq:SquareResidue_Conv}
\end{align}
$E_{ex}(N)$ is the accuracy of the partial sums of  $u$, $v$, and $h$ against the exact solutions which is measured via evaluating \eqref{eq:square_residue_int} with 
\begin{equation}
	e(N;x,y,t) = \left( S_N(u)- u \right)^2+ \left( S_N(v)- v\right)^2+ \bigl( S_N(h)- h\bigr)^2.
	\label{ADMExact_Residue}
\end{equation}
$\hat{E}(N)$ is the accuracy of the partial sums of  $u$, $v$, and $h$ against the numerical solutions which is measured via evaluating \eqref{eq:square_residue_int} with 
\begin{equation}
	e(N;x,y,t) = \left( S_N(u)- \hat{u} \right)^2+ \left( S_N(v)- \hat{v}\right)^2+ \bigl( S_N(h)- \hat{h}\bigr)^2.
\label{ADM_Residue}
\end{equation}
$\hat{E}_{ex}$  is the accuracy between the numerical and exact solutions, which is measured via evaluating \eqref{eq:square_residue_int} with 
\begin{equation}
	e(N;x,y,t) =	\left( u-\hat{u}\right)^2+ \left( v- \hat{v} \right)^2+ \bigl( h- \hat{h} \bigr)^2.
\label{TrueSoln_Residue}
\end{equation}

\par In all evaluations, we follow \citet{matskevich2019exact} where $F=1$ represents the characteristic velocity as $U_0=\sqrt{gH_0}$.  The summaries of all parameters used for our evaluations are listed in Table \ref{tab:my_label} below. Equation \eqref{eq:square_residue_int} is discretised with spatial grid spacings of $\Delta x=0.1$ and $\Delta y=0.1$ and a temporal grid spacing of $\Delta t=0.1$.  Numerical implementations ($\hat{u}$, $\hat{v}$, and $\hat{h}$) are done using the large-particle method as outlined by \cite{matskevich2019exact}. 

\begin{table}[!h]
	\centering
	\caption[center]{Summary of evaluation parameters, initial conditions, and applicable exact solutions used to validate Conditions I-VII.}
	\vspace{2mm}
	\begin{tabular}{cccccccc}
		\hline 
		Condition &$F$&  $\bar{f}$ & $D_0$ & $L$ & $l$ &  Other Parameters & Exact solutions \\
		\hline
		I  & 1 & 0.5 & 0 & - & -  & $\eta_x=10^{-4}$ & Theorem 3.3\\
		II  & 1 & 0.5  & 0& - & -  & $\eta_y=10^{-4}$ & Corollary 3.4(i)\\
		III & 1 & 0.5  & 0& - & -  &  $\eta_x=\eta_y=10^{-4}$ & Corollary 3.4(ii)\\
		IV  & 1 & 0.5  & 0& - & -  &  $h_0=10^{-4}$ & Theorem 3.9(i)\\
		V & 1 & 0.5  & 0& - & -  &  $h_0=10^{-4}$ & Theorem 3.9(ii)\\
		VI  & 1 & 0.5  & 0& - & -  &  $h_0=10^{-4}$ & Theorem 3.10(i)\\
		VII & 1 & 0.5  & 0& - & -  & 
		$h_0=10^{-4}$ & Theorem 3.10(ii)\\
		\hline
	\end{tabular}
\label{tab:my_label}
\end{table}

\subsection{Results}

\par Table \ref{tab:error_all} presents a summary of the convegence and accuracy results, where the partial sums (for $N = 2, 4$ and $6$) was used to assess the level of convergence. We note the convergence trend where the relative error margins stabilise between ${\rm O}\left(10^{-11}\right)$ and ${\rm O}\left(10^{-6}\right)$ at $N = 6$, which indicate that the Adomian approximations of up to six terms in its partial sum yield effective and robust estimates for Conditions I-VII. This is further validated when examining the accuracy of these partial sums with the numerical solutions, where the accuracies range between ${\rm O}\left(10^{-6}\right)$ and ${\rm O}\left(10^{-4}\right)$. We also note the comparisons between the explicit solutions generated for Conditions I-VII and the numerical solutions, where these deviations are also miniscule. 

\begin{table}[h]
    \caption{Summary of convergence trend (for $N = 2, 4$ and $6$) and accuracy (for $N = 6$) via integral squared error $E(N)$ calculations for Conditions I-VII.}
	\vspace{2mm}
	\centering
	\begin{tabular}{cllllll}
		\hline
		& $E_c(N=2)$ & $E_c(N=4)$ &   $E_c(N=6)$ & $E_{ex}(N = 6)$ & $\hat{E}(N = 6)$ & $\hat{E}_{ex}$ \\
		\hline
		Minimum & $3.3\times 10^{-3}$ & $1.5\times 10^{-6}$ & $8.2\times 10^{-11}$ &$1.6\times 10^{-12}$ & $3.1\times 10^{-6}$ & $3.1\times 10^{-6}$ \\
		Maximum  & $6.5\times 10^{-3}$ & $2.5\times 10^{-4}$ & $7.0\times 10^{-6}$ & $1.3\times 10^{-7}$ & $2.1\times 10^{-4}$ & $2.1\times 10^{-4}$\\
		\hline
	\end{tabular}
\label{tab:error_all}
\end{table} 

\par Figures  \ref{fig:error_case_12} through \ref{fig:error_case_15} present the behaviour of the ADM partial sums of $u$, $v$, and $h$  (for $N = 6$) along with Conditions II and IV (section \ref{sec:new_exact}) are used as examples. In each case we note the direct relationship between the initial conditions (figures \ref{fig:error_case_12}-\ref{fig:error_case_15} part (a)) and a temporal snapshot of the behaviour of the corresponding partial sums at $t = 1$ (figures \ref{fig:error_case_12}-\ref{fig:error_case_15} part (b)), which illustrates the velocity vector field $\boldsymbol{u} = \left<u\left(x,y,1 \right),v\left(x,y,1\right)\right>$ over the contour representing the free surface height $h\left(x,y,1\right)$. Figure \ref{fig:error_case_12}(a) shows the initial zero velocity over constant free surface gradient $\eta_x=10^{-4}$, which corresponds to the initial conditions represented by \eqref{eq:CondII_IC}. Figure \ref{fig:error_case_12}(b) confirms the temporal behaviour where we note the behaviour of $\boldsymbol{u}$ over the contour, which is analogous to the exact solutions described in equation \eqref{all_Cor41i}.
However, in figure \ref{fig:error_case_12}(b) we also observe the rotating velocity field $\boldsymbol{u}$ over the contour illustrating the behaviour of inertial geostrophic oscillations. These effects are not only driven by the pressure gradient due to variations in the free surface height but also due to the Coriolis force, which are also noticed analytically when constructing the ADM decompositions. These confirmations continue in figure \ref{fig:error_case_15}, where part (a) illustrates the behaviour of the initial velocity $\boldsymbol{u}_0 = \left<u\left(x,y,0 \right),v\left(x,y,0\right)\right>$ with respect to the initial free surface height $h_0 = h\left(x,y,0\right)$. Figure \ref{fig:error_case_15} also shows the correlation between the initial conditions and analytical solutions to Condition IV while also illustrating the effects of anticyclonic vortices with finite escape time as shown in figure \ref{fig:error_case_15} (b).  Specifically, we note the clockwise orientation of  $\boldsymbol{u}$ that is consistent with the behaviour of anticyclonic vortices which are valid for $t \in \left[0,  \pi/\left(2 \bar{f}\right) \right)$.

\begin{figure}[h]
	
	(a) \hspace{3in} (b)
	
	\centering
	\includegraphics[width=2.9in]{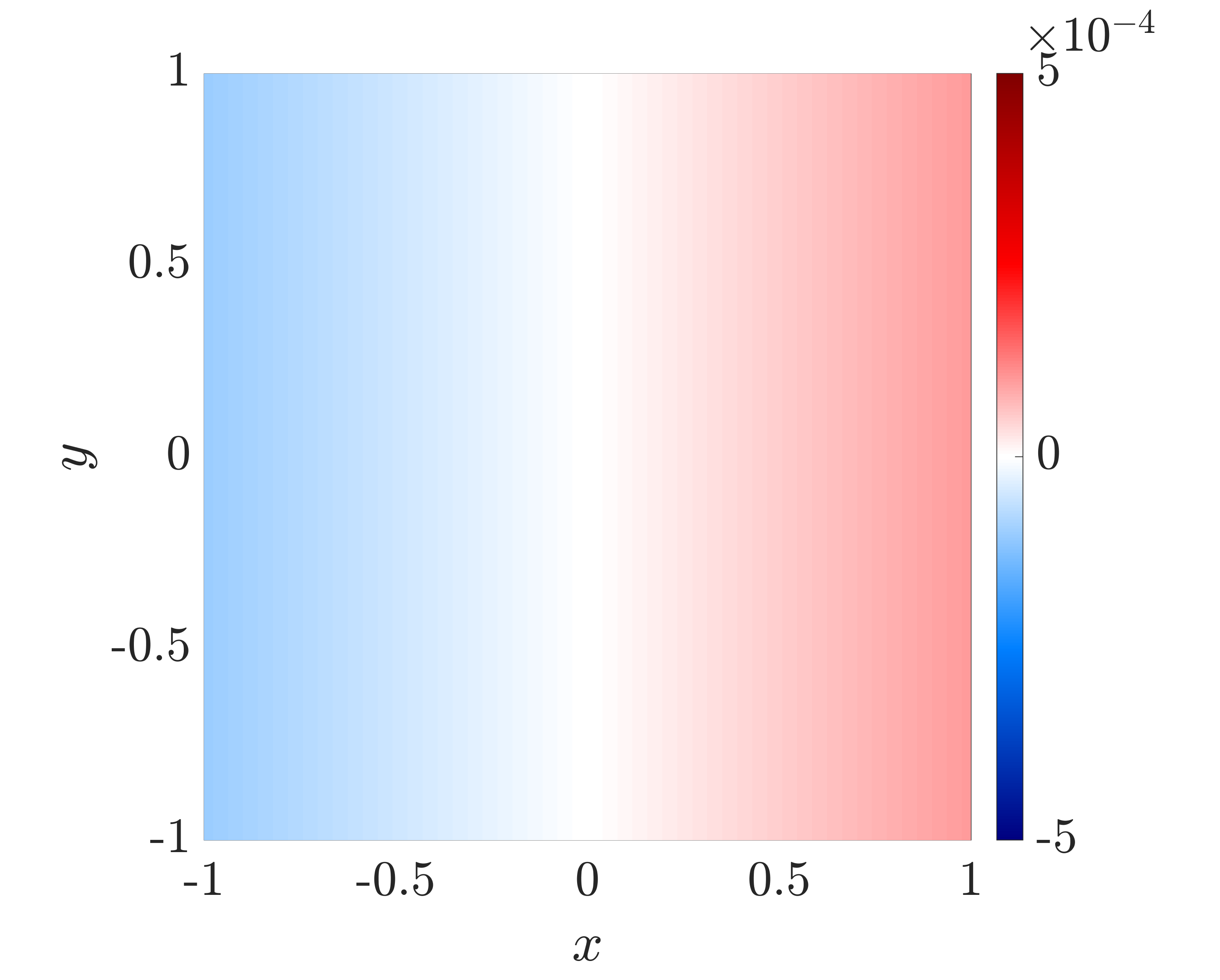}
	\includegraphics[width=2.9in]{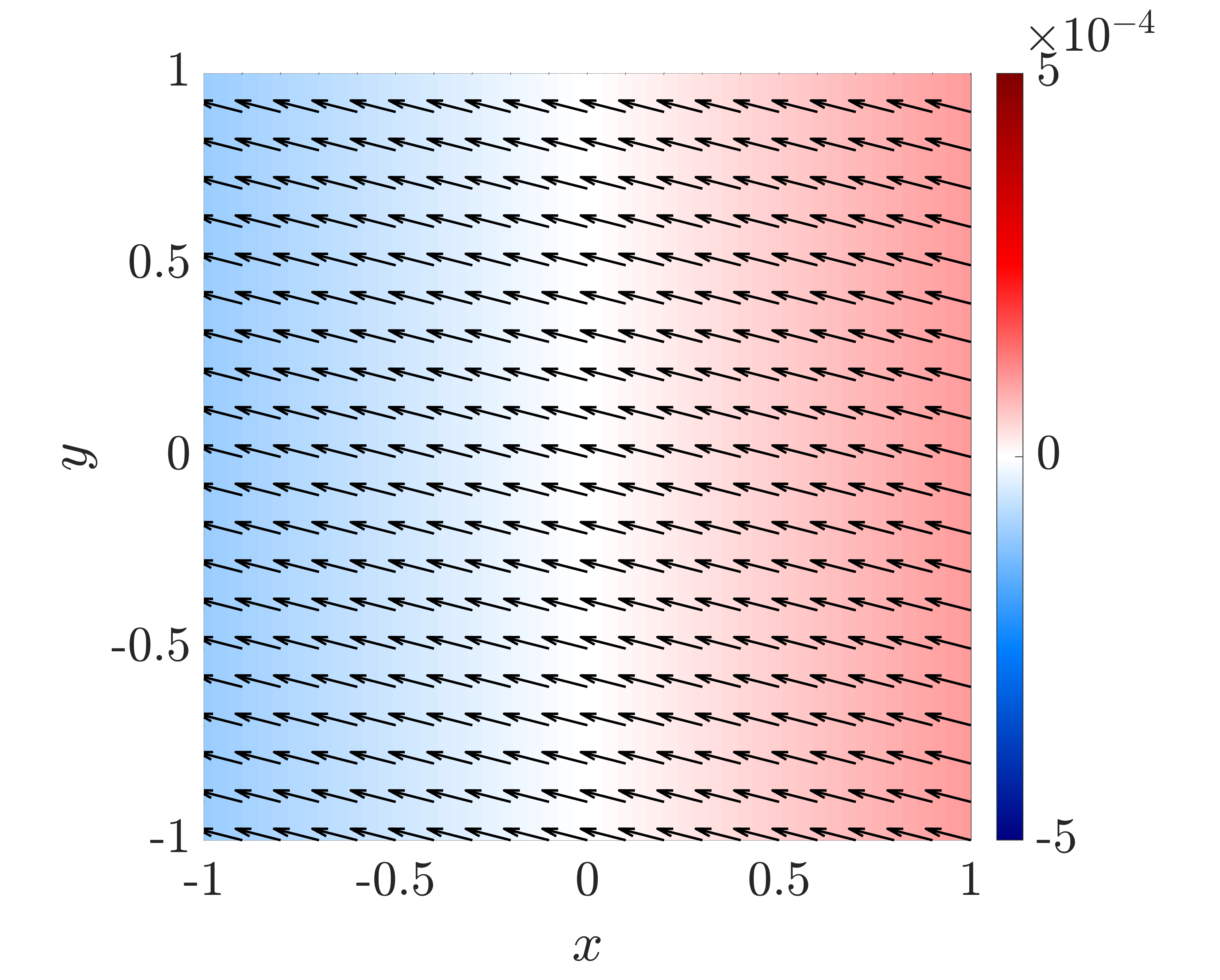}
	\caption{Velocity vector field $\boldsymbol{u}$ (arrow) and free surface height $h$ (contour) behaviour for Condition II: (a) initial condition at $t=0$ and (b) partial sum approximation based on ADM (with $N=6$) at $t=1$. Parameters used include $F=1$, $\bar{f}=0.5$, $D_0=0$, and $\eta_x=10^{-4}$ (Colour online).}
\label{fig:error_case_12}
\end{figure}

\begin{figure}[h]
	
	(a) \hspace{3in} (b)
	
	\centering
	\includegraphics[width=2.9in]{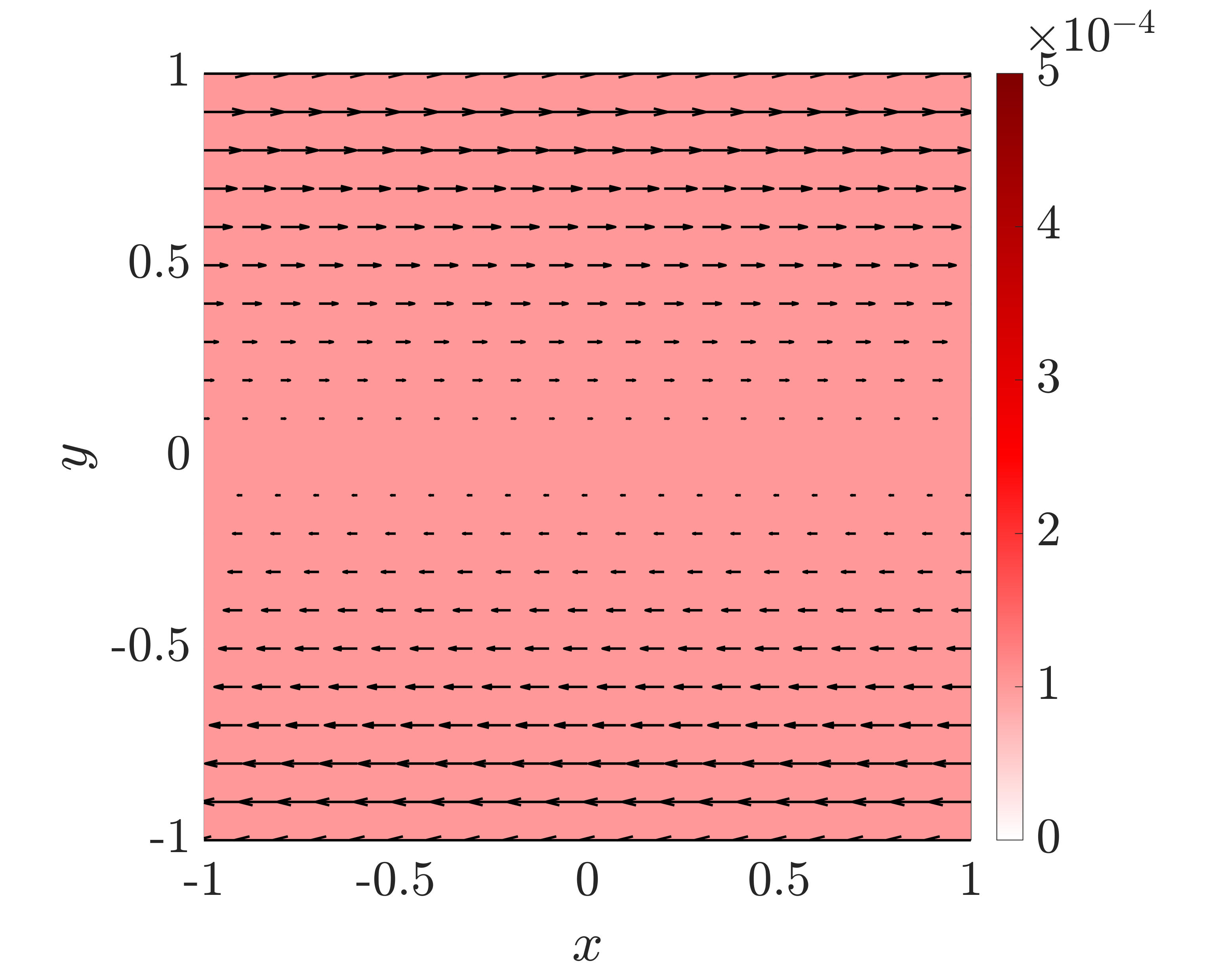}
	\includegraphics[width=2.9in]{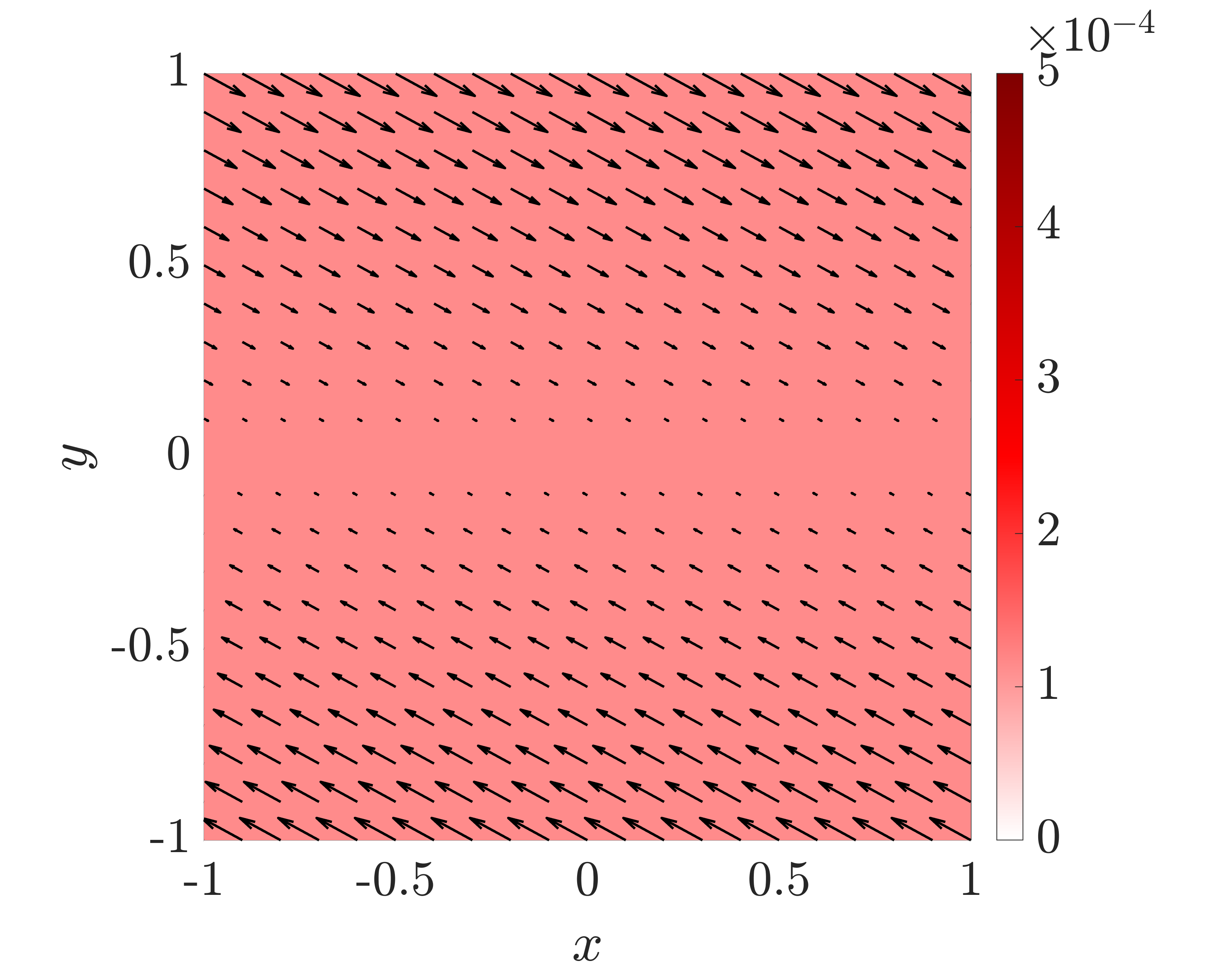}
	\caption{Velocity vector field $\boldsymbol{u}$ (arrow) and free surface height $h$ (contour) behaviour for Condition IV: (a) initial condition at $t=0$, (b) partial sum approximation based on ADM (with $N=6$) at $t=1$. Parameters used include $F=1$, $\bar{f}=0.5$, $D_0=0$, and $h_0=10^{-4}$ (Colour online).}
\label{fig:error_case_15}
\end{figure}

\section{Discussion}
\label{sec:discussion}

\par This work employs Adomian decomposition method (ADM) to the shallow water equations, where we made the following main contributions. First, we used these methods as reverse engineering mechanisms to develop theoretical connections between the ansatz formulations of previous works, such as \cite{thacker1981some}, \cite{shapiro1996nonlinear} and \cite{matskevich2019exact}, as well as develop a connection to the corresponding reduced systems of shallow water equations. Furthermore, we developed some novel families of closed-form solutions that respectively describe inertial oscillations and anticyclonic vortices with finite escape times over flat bottom topographies. We perform various numerical experiments against several cases that yielded relative errors between ${\rm O}\left(10^{-6}\right)$ and ${\rm O}\left(10^{-4}\right)$. Our numerical visualizations further demonstrate the validity of our approach, which illustrate the consistency with the dynamic behaviour for several scenarios while also preserving  the correlation between the physical parameters. 

\par Our study establishes the flexibility of these methods in terms of not only preserving the correlation of parameters with respect to the overall nonlinear physical behaviour but also alleviating the need to make restrictive assumptions like those based on the overall flow behaviour. Moreover, we illustrate that these techniques can be used to analytically deduce other aspects of shallow water phenomenon based on the characteristics of initial flows in which, to the best of our knowledge, this work is the first to explore these concepts. Therefore, some avenues of future work include extending these techniques to understand the implications of external forces such as the effects of bottom friction which are applicable to understanding various coastal effects such as impacts from tsunamis. Another area of research is extending this framework to analyse practical bottom topographies and shocks,  which will consider bottom terrains that extend beyond those of parabolic shapes.

\section*{Disclosure statement}

No potential conflict of interest was reported by the author(s).

\section*{Article Word Count}
3,710 words

\bibliographystyle{gGAF}
\bibliography{GAFD-2022-0015-main}

\end{document}